\documentclass[envcountsame]{llncs}
\usepackage{amsmath, amssymb}
\usepackage[utf8]{inputenc}
\usepackage[obeyFinal]{todonotes}
\usepackage{braket}
\usepackage{booktabs}
\usepackage{tabularx}
\newcolumntype{Y}{>{\centering\arraybackslash}X}
\usepackage[numbers,sort]{natbib}
\usepackage{usebib}
\usepackage{natbib}
 \usepackage{hyperref}%
\usepackage{wrapfig}

\makeatletter
\renewcommand\bibsection%
{
	\section*{\refname
		\@mkboth{\MakeUppercase{\refname}}{\MakeUppercase{\refname}}}
}
\makeatother

\def\N{\mathbb{N}}

\def\sz{\mathsf{0}}
\def\so{\mathsf{1}}
\def\sx{\mathsf{x}}

\DeclareMathOperator{\Fact}{Fact}
\DeclareMathOperator{\Pref}{Pref}
\DeclareMathOperator{\Suff}{Suff}

\DeclareMathOperator{\npal}{npal}
\DeclareMathOperator{\NPal}{NPal}
\DeclareMathOperator{\Pal}{Pal}

\DeclareMathOperator{\word}{\mathsf{serialise}}
\renewcommand{\epsilon}{\varepsilon}

\def\nth#1{#1$^{\text{th}}$}
\def\pnEquiv#1{\equiv_{#1}}

\def\noo#1{|{#1}|_{\so}}

\title{On Collapsing Prefix Normal Words}
\author{Pamela Fleischmann\inst{1} \and Mitja Kulczynski\inst{1} \and Dirk Nowotka\inst{1} \and Danny~B\o gsted~Poulsen\inst{2}}
\authorrunning{P. Fleischmann \and M. Kulczynski \and D. Nowotka \and D. Poulsen}

\institute{
Department of Computer Science, Kiel University,
Kiel, Germany\\
\email{\{fpa,mku,dn\}@informatik.uni-kiel.de}\\~\\
\and
Department of Computer Science, Aalborg University,
Aalborg, Denmark\\
\email{dannybpoulsen@cs.aau.dk}}


\newcommand{\lr}{LR}
\newcommand{\pnPal}{pnPal}

\begin{document}

\maketitle
\begin{abstract}
Prefix normal words are binary words in which each prefix has at least the same 
number of $\so$s  as any factor of the same length. Firstly introduced in 2011, 
the problem of determining the index (amount of equivalence classes for a given 
word length) of the prefix normal equivalence relation is still open. In this 
paper, we investigate two aspects of the problem, namely prefix normal 
palindromes and so-called collapsing words (extending the notion of critical 
words). We prove characterizations for both the palindromes and the collapsing 
words and show their connection. Based on this, we show that still open problems 
regarding prefix normal words can be split into certain subproblems.

\end{abstract}

\section{Introduction}\label{intro}

Two words are called abelian equivalent if the amount of each
letter is identical in both words, e.g.  {\em rotor} 
and {\em torro} are abelian equivalent albeit {\em banana} and 
{\em ananas} are not. Abelian equivalence has been studied with
various generalisations and specifications  such as 
abelian-complexity, $k$-abelian equivalence, avoidability of 
($k$-)abelian powers and much more (cf. e.g.,
\cite{cassaigne2011,coven1973,EhlersMMN15,currie2009recurrent,keranen1992abelian,puzynina2013abelian,richomme2010balance,richomme2010abelian} 
). The number of occurrences of each letter is captured in the
Parikh vector (also known as Parikh image or Parikh mapping)
(\cite{Parikh66}): given a lexicographical order on the alphabet, the \nth{$i$} component of this vector is the amount of the \nth{$i$} letter of the alphabet in a given word. Parikh vectors have been 
studied in \cite{Dassow:2001:PMI,journals/iandc/Karhumaki80,oai:CiteSeerPSU:440186} and are generalised to Parikh matrices for saving more information about the word than just the amount of letters (cf. eg., \cite{Mat04,Salomaa05}).

A recent generalisation of abelian equivalence, for words over the binary 
alphabet $\{\sz,\so\}$,  is  prefix normal 
equivalence (pn-equivalence)~\cite{journals/corr/abs-1805-12405}. Two binary 
words are pn-equivalent if their maximal numbers of $\so$s in any factor of 
length $n$ are 
equal for all $n\in\N$. \citet{journals/tcs/BurcsiFLRS17} showed that this 
relation is indeed an equivalence 
relation and moreover that each class contains exactly one uniquely determined 
representative - called a \emph{prefix normal word}. A word $w$  is said to be prefix normal if the prefix of $w$ of any length has at least the 
number of $\so$s as any of $w$'s factors of the same length. For instance, the 
word $\so\so\sz\so\sz\so$ is prefix normal but $\so\sz\so\so\sz\so$ is not, 
witnessed by the fact that 
$\so\so$ is a factor but not a prefix. Both words are pn-equivalent. In addition 
to being representatives of 
the pne-classes, prefix normal words are also of interest since they are
connected to Lyndon words, in the sense that every prefix normal word
is a pre-necklace \cite{journals/corr/abs-1805-12405}. Furthermore, as shown in \cite{journals/corr/abs-1805-12405}, the indexed jumbled pattern matching problem (see e.g. \cite{journals/ijfcs/BurcsiCFL12,journals/corr/BurcsiFLRS14a,conf/spire/LeeLZ12}) is connected to prefix normal forms: if the prefix normal forms are 
given, the indexed jumbled pattern matching problem can be solved in linear time 
$\mathcal{O}(n)$ of the word length $n$. The best known algorithm for this problem has a run-time of $\mathcal{O}(n^{1.864})$(see \cite{journals/corr/ChanL15}). 
Consequently there is also an interest in prefix normal forms from an algorithmic point of view. An algorithm for the computation of all prefix normal words of length $n$ in run-time $\mathcal{O}(n)$ per word is given in \cite{conf/lata/CicaleseLR18}. \citet{balister2019asymptotic} showed that the number of prefix normal words of length $n$ is $2^{n-\Theta(\log^2(n))}$ and the class of a given prefix normal word contains at most $2^{n-O(\sqrt{n\log(n)})}$ elements. 
A closed formula for the number of prefix normal words is still unknown.
In ``OEIS'' \cite{sloane2003line} the number of prefix normal words of length $n$ (A194850), a list of binary prefix normal words (A238109), and the maximum size of a class of binary words of length $n$ having the same prefix normal form (A238110), can be found. An extension to infinite words is presented in \cite{DBLP:conf/sofsem/CicaleseL019}.

\noindent
 {\bf Our contribution.} 
In this work we investigate two conspicuities mentioned in
 \cite{journals/corr/abs-1805-12405,journals/corr/BurcsiFLRS14}:
 palindromes and extension-critical words. Generalising the result of
 \cite{journals/corr/BurcsiFLRS14} we prove that prefix normal
 palindromes (\pnPal{}) play a special role since they are not pn-equivalent
 to any other word. Since not all palindromes are prefix normal,
 as witnessed by $\so\sz\so\so\sz\so$, determining the number of
 \pnPal s is an (unsolved) sub-problem. We show that
 solving this sub-problem brings us closer to determining the index, i.e. 
number of equivalence classes w.r.t. a  given word length,  of
 the pn-equivalence relation. Moreover we give a characterisation
 based on the maximum-ones function for \pnPal s. The
 notion of extension-critical words is based on an iterative approach:
 compute the prefix normal words of length $n+1$ based on the prefix
 normal words of length $n$. A prefix normal word $w$ is called
 extension-critical if $w\so$ is not prefix normal. For instance, the
 word $\so\sz\so$ is prefix normal but $\so\sz\so\so$ is not and thus
 $\so\sz\so$ is called extension-critical. This means that all
 non-extension-critical words contribute to the class of prefix normal
 words of the next word-length. We investigate the set of
 extension-critical words by introducing an equivalence relation {\em
 collapse}, grouping all extensional-critical words that are
 pn-equivalent w.r.t. length $n+1$. Finally we prove that (prefix
 normal) palindromes and the collapsing relation (extensional-critical
 words) are related.
 In contrast to \cite{journals/corr/abs-1805-12405} we work with suffix-normal words
 (least representatives) instead of prefix-normal words. It follows from Lemma \ref{leastsuffix} that
 both notions lead to the same results.

\noindent
 {\bf Structure of the paper.} In Section 2, the basic definitions and notions 
are presented. In Section 3, we present the results on \pnPal s. Finally, in Section 4, the iterative approach based on collapsing 
words is shown. This includes a lower bound and an upper bound for the number of 
prefix normal words, based on \pnPal s and the collapsing 
relation. Due to space restrictions all proofs are in the appendix.

 \medskip
%
%

\section{Preliminaries}

Let $\N$ denote the set of natural numbers starting with $1$, and let 
$\N_0=\N\cup\{0\}$. Define $[n]=\{1,\dots,n\}$, for $n\in\N$, and set 
$[n]_0=[n]\cup\{0\}$.

An alphabet is a finite set $\Sigma$, the set of all finite words over $\Sigma$ 
is denoted by $\Sigma^{\ast}$, and the empty word by $\varepsilon$. Let 
$\Sigma^+=\Sigma^{\ast}\backslash\{\varepsilon\}$ be the free semigroup for 
the free monoid $\Sigma^{\ast}$. Let $w[i]$ denote the \nth{$i$} letter of $w\in\Sigma^{\ast}$ that is $w=\varepsilon$ or $w=w[1]\dots w[n]$. The {\em length} of a word $w=w[1]\dots w[n]$ is denoted by $|w|$ and let $|\varepsilon|=0$. Set $w[i..j]=w[i]\dots w[j]$ for $1\leq i\leq j\leq |w|$.
Set $\Sigma^n=\{w\in\Sigma^{\ast}|\,|w|=n\}$ 
for all $n\in\N_0$. The number of occurrences of a letter $\sx\in\Sigma$ in 
$w\in\Sigma^{\ast}$ is denoted by $|w|_{\sx}$. For a given word $w\in\Sigma^n$ the 
{\em reversal} of $w$ is defined by $w^R=w[n]\dots w[1]$. A word 
$u\in\Sigma^{\ast}$ is a factor of $w\in\Sigma^{\ast}$ if $w=xuy$ holds for 
some words $x,y\in\Sigma^{\ast}$. If $x=\varepsilon$ then $u$ is called a 
{\em prefix} of $w$ and a {\em suffix} if $y=\varepsilon$. 
Let $\Fact(w), \Pref(w),\Suff(w)$ denote the sets of all factors, prefixes, and suffixes respectively. Define
$\Fact_k(w)=\Fact(w)\cap\Sigma^k$ and $\Pref_k(w)$, $\Suff_k(w)$ are defined accordingly. Notice that 
$|\Pref_k(w)|=|\Suff_k(w)|=1$ for all $k\leq |w|$.
The powers of 
$w\in\Sigma^{\ast}$ are recursively defined by $w^0=\varepsilon$, 
$w^n=ww^{n-1}$ for $n\in\N$.

Following \cite{journals/corr/abs-1805-12405}, we only consider binary alphabets, namely 
$\Sigma=\{\sz,\so\}$ with the fixed lexicographic order induced by $\sz < \so$ on $\Sigma$. In analogy to binary numbers we call a word $w\in\Sigma^n$ {\em odd} if 
$w[n]=\so$ and {\em even} otherwise.

For a function $f:[n]\rightarrow\Delta$ for $n\in\N_0$ and an arbitrary 
alphabet $\Delta$ the concatenation 
of the images defines a finite
word $\word(f)=f(1)f(2)\dots 
f(n)\in\Delta^{\ast}$. Since $\word$  is bijective, we will identify 
$\word(f)$ with 
$f$ and use in both cases $f$ (as long as it is clear from the context).  This 
definition allows 
us to access $f$'s {\em reversed function} $g:[n]\rightarrow\Delta;k\mapsto 
f(n-k+1)$ easily by 
$f^R$. 

\begin{definition}
The {\em maximum-ones functions} is defined for a  word $w\in\Sigma^{\ast}$ by
$f_w:[|w|]_0 \rightarrow [|w|]_0;\,k\mapsto
\max\left\{\,\noo{v} \mid v\in\Fact_k(w)\right\},$ 
giving for each $k\in[|w|]_0$ the maximal number of $\so$s occuring in a factor 
of length $k$. Likewise the {\em prefix-ones and suffix-ones functions} are defined by
$p_w:[|w|]_0 \rightarrow [|w|]_0; k\mapsto |\Pref_k(w)|_{\so}$ and 
$s_w:[|w|]_0 \rightarrow [|w|]_0; k\mapsto |\Suff_k(w)|_{\so}$.
\end{definition}

\begin{definition}
Two words  $u,v\in\Sigma^{n}$ are called {\em prefix-normal equivalent} 
(pn-equivalent, $u\pnEquiv{n}v$) if $f_u=f_v$ holds and $v$'s equivalence class 
is denoted by 
$[v]_{\equiv}=\{u\in\Sigma^{n}|\,u\pnEquiv{n}v\}$. A word $w\in\Sigma^{\ast}$ is 
called {\em prefix (suffix) normal} iff $f_w=p_w$ ($f_w=s_w$ resp.) holds.  
Let $\sigma(w)=\sum_{i\in[n]}f_w(i)$ denote the {\em maximal-one sum}
of a $w\in\Sigma^{n}$.
\end{definition}

\begin{remark}\label{sufpref}
Notice that $s_w=p_{w^R}$, $f_w=f_{w^R}$, $p_w(i),s_w(i)\leq f_w(i)$ for 
all $i\in\N_0$. By $p_{w^R}=s_w$ and 
$f_w=f_{w^R}$ follows immediately that a word $w\in\Sigma^{\ast}$ is prefix 
normal iff its reversal is suffix normal.
\end{remark}
\citet{journals/corr/abs-1805-12405} showed that for  each word
$w\in\Sigma^{\ast}$ there exists exactly one $w'\in [w]_{\equiv}$ that is
prefix normal - the prefix normal form of $w$. We introduce the concept of 
\emph{least representative}, which is the lexicographically smallest element of 
a class and thus also  unique.
As mentioned in \cite{journals/tcs/BurcsiFLRS17} palindromes play a special role.
Immediately by $w=w^R$ for $w\in\Sigma^{\ast}$,  we have $p_w=s_w$, 
i.e. palindromes are 
the only words that can be prefix and suffix normal. Recall that not all 
palindromes are 
prefix normal witnessed by $\so\sz\so\so\sz\so$. 

\begin{definition}
A palindrome is called {\em prefix normal palindrome} (\pnPal{}) if it is prefix
normal. Let $\NPal(n)$ denote the set of all prefix normal palindromes
of length $n\in\N$ 
and set $\npal(n)=|\NPal(n)|$. Let $\Pal(n)$ be the set of all
palindromes of length $n\in\N$.  
\end{definition}

 \begin{table}[t]
 \begin{center}
	\begin{tabular}{c c c}
		\scriptsize word length & \scriptsize prefix normal palindromes & \scriptsize \# prefix normal words\\
		\cmidrule(lr){1-1} \cmidrule(l){2-2} \cmidrule(l){3-3} 
		$1$ & $\sz$, $\so$ & $2$\\
		$2$ & $\sz^2$, $\so^2$ & $3$\\
		$3$ & $\sz^3$, $\so\sz\so, \so^3$& $5$ \\
		$4$ & $\sz^4$, $\so\sz\sz\so, \so^4$& $8$\\
		$5$ & $\sz^5$, $\so\sz\sz\sz\so, \so\sz\so\sz\so, \so\so\sz\so\so, 
\so^5$& $14$\\
		$6$ & $\sz^6$, $\so\sz\sz\sz\sz\so, \so\so\sz\sz\so\so, \so^6$ & $23$\\
		$7$ & $\sz^7$, $\so\sz^2\so\sz^2\so,\so\sz^5\so,\so\sz\so\sz\so\sz\so,\so^2\sz\so\sz\so^2,\so^2\sz^3\so^2,\so^3\sz\so^3, \so^7$& $41$\\
		$8$ & $\sz^8$, $\so\sz^6\so,\so\sz\so\sz^2\so\sz\so,\so^2\sz^4\so^2,\so^2\sz\so^2\sz\so^2,\so^3\sz^2\so^3, \so^8$ & $70$\\

		\cmidrule(lr){1-1} \cmidrule(l){2-2} \cmidrule(l){3-3} 
	\end{tabular}
	\makeatletter\def\@captype{table}\makeatother
	\caption{Prefix normal palindromes (\pnPal s).}
	\label{tab:prefixnormalpalindromes}
	\end{center}
\vspace*{-0.75cm}
\end{table}

\section{Properties of the Least-Representatives}

Before we present specific properties of the least representatives (\lr) for a 
given word length, we mention some useful properties of the maximum-ones, 
prefix-ones, and suffix-ones functions (for the basic properties we refer to 
\cite{journals/corr/abs-1805-12405, journals/tcs/BurcsiFLRS17} and the 
references therein). Since we are 
investigating only words of a specific length, we fix $n\in\N_0$.

Beyond the 
relation $p_w=s_{w^R}$ the mappings $p_w$ and $s_w$ are determinable from each 
other.
Counting the $\so$s in a suffix of length $i$ and adding the $\so$s in the 
corresponding prefix of length $(n-i)$ of a word $w$, gives the overall amount 
of 
$\so$s of $w$, namely
\[
p_w(n)=p_w(n-i)+s_w(i)\quad\mbox{and}\quad s_w(n)=p_w(i)+s_w(n-i).
\]
For suffix (resp. prefix) normal words this leads to 
$p_w(i)=f_w(n)-f_w(n-i)$ resp. $s_w(i)=f_w(n)-f_w(n-i)$ witnessing the fact 
$p_w=s_w$ for palindromes (since both equation hold). Before we show that 
indeed \pnPal s 
form a singleton class w.r.t. $\equiv_n$, we need the relation between the 
lexicographical order and prefix and suffix normality.

\begin{lemma}\label{leastsuffix}
The prefix normal form of a class is the lexicographically largest element 
in the class and the suffix-normal of a class is a \lr{}.
\end{lemma}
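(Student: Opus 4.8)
The plan is to prove both halves symmetrically, using the reversal duality recorded in Remark~\ref{sufpref} to reduce the suffix-normal claim to the prefix-normal one. First I would fix an equivalence class $C=[w]_{\equiv}$ and recall that, by \citet{journals/corr/abs-1805-12405}, $C$ contains a unique prefix normal element $w'$; the goal is to show $w'$ is lexicographically \emph{largest} in $C$. Since all elements of $C$ share the same maximum-ones function $f=f_{w'}$, and for any $u\in C$ we have $p_u(i)\le f_u(i)=f(i)$ for every $i$, while the prefix normal $w'$ satisfies $p_{w'}=f$, it follows that $p_u(i)\le p_{w'}(i)$ for all $i$. The intuition is that $w'$ ``packs its $\so$s as early as possible,'' so it should dominate every other class member in the lexicographic order.

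The key step is to convert this pointwise prefix-count domination into a lexicographic comparison. Given distinct $u,w'\in C$, let $j$ be the first position where they differ, so that their length-$(j-1)$ prefixes coincide and hence $p_u(j-1)=p_{w'}(j-1)$. I would then argue that $w'[j]\ge u[j]$, i.e. $w'[j]=\so$ whenever they differ: if instead $w'[j]=\sz$ and $u[j]=\so$, then $p_u(j)=p_u(j-1)+1 > p_{w'}(j-1) = p_{w'}(j)$, contradicting $p_u(j)\le p_{w'}(j)$. Since $\so>\sz$ in the fixed order, having $w'[j]=\so>u[j]=\sz$ at the first differing position means $w'>u$ lexicographically. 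As $u$ was an arbitrary element of $C\setminus\{w'\}$, the prefix normal word $w'$ is the lexicographically largest element of its class.

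For the second half I would invoke Remark~\ref{sufpref}: a word is suffix normal iff its reversal is prefix normal, and reversal preserves the class in the sense that $f_w=f_{w^R}$. Concretely, given the suffix normal representative $w''$ of $C$, the reversed word $(w'')^R$ is prefix normal; applying the argument above (now with $p_{(\cdot)^R}=s_{(\cdot)}$ and the suffix-count domination $s_u(i)\le s_{w''}(i)$) shows $w''$ maximises suffix $\so$-counts, which forces it to be lexicographically \emph{smallest} — at the first differing position the suffix-normal word carries the $\sz$. One must take care that reversal flips ``largest'' to ``smallest'': packing $\so$s as late as possible, equivalently $\sz$s as early as possible, is exactly the least-representative property. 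Hence $w''$ is the lexicographically smallest element of $C$, i.e. the \lr{}, completing the proof.

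The main obstacle I expect is the bookkeeping at the first differing position, specifically handling it carefully enough that the strict inequality $p_u(j)>p_{w'}(j)$ genuinely contradicts the domination bound rather than merely the equality case; one must be sure that $p_{w'}(j)=p_{w'}(j-1)$ when $w'[j]=\sz$ and that no off-by-one creeps in between the two functions $p$ and $s$ under reversal. Everything else is a direct consequence of the pointwise bound $p_u\le f=p_{w'}$ that prefix normality supplies.
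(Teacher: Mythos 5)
Your proof is correct. For the first half it is essentially the paper's argument: both proofs locate the first position $j$ at which a competitor $u$ differs from the prefix normal word $w'$, use $p_u(j-1)=p_{w'}(j-1)$ together with the domination $p_u\le f_{w'}=p_{w'}$ supplied by prefix normality, and derive the contradiction $p_u(j)>p_{w'}(j)$ when $u[j]=\so$ and $w'[j]=\sz$. The two proofs part ways on the second half. The paper obtains the least representative by reversing the prefix normal form $w'$: it asserts that this reversal is the lexicographically smallest element of the class and then checks its suffix normality via Remark~\ref{sufpref}. You instead run the mirror of the first argument directly on the suffix normal word $w''$, using $s_u\le f=s_{w''}$ and the identity $s_u(n-j+1)=|u|_{\so}-p_u(j-1)$ at the first differing position to force $w''[j]=\sz$. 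Your route is slightly longer but self-contained, and it actually supplies the justification that the paper compresses into the word ``thus'': reversal is not an order-reversing map on $\Sigma^n$ in general, so the claim that the reversal of the lexicographically largest class element is the lexicographically smallest one needs exactly the suffix-count argument you sketch. The only thing to tighten when writing it up is the bookkeeping you already flagged: the relevant suffix lengths are $n-j+1$ and $n-j$, and the equality $s_u(n-j+1)=s_{w''}(n-j+1)$ follows from the agreement of the length-$(j-1)$ prefixes together with $|u|_{\so}=|w''|_{\so}$, after which $w''[j]=\so$ would give $s_u(n-j)=s_{w''}(n-j)+1$, contradicting the domination.
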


\begin{proof}
Let $w\in\Sigma^n$ be the prefix normal form of the class $[w]_{\equiv}$. 
Suppose there existed $v\in[w]_{\equiv}$ with $v>w$. Let $i\in[n]$ be the 
smallest index with $v[i]\neq w[i]$. Since we are only considering binary 
alphabets we get $v[i]=1$ and $w[i]=0$. By the prefix normality of $w$ we have 
$f_w(i)=p_w(i)=p_w(i-1)$ but on the other hand, $v\in[w]_{\equiv}$ and the 
minimality of $i$ implies 
\[
f_v(i)=f_w(i)=p_w(i-1)=p_v(i-1)=p_v(i)-1\leq f_v(i)-1<f_v(i).
\]
This contradiction shows that the prefix normal form of a class is the 
lexicographically largest element. The reverse $w^R$ is thus the 
lexicographically smallest element of the class which is by definition the 
$\lr${}. By Remark~\ref{sufpref} follows
\[
s_{w^R}=p_w=f_w=f_{w^R}
\]and hence the suffix 
normality of the $\lr${}.
\qed
\end{proof}

Lemma~\ref{leastsuffix} implies that a word being prefix and suffix normal 
forms a singleton class w.r.t. $\equiv_n$. As mentioned $p_w=s_w$ only holds 
for palindromes.

\begin{proposition}\label{corLol}
  For a word $w\in\Sigma^n$ it holds that $|[w]|_{\equiv}=1$ iff  $w\in\NPal(n)$. 
\end{proposition}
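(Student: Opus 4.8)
The plan is to prove the biconditional in Proposition~\ref{corLol} by establishing both directions separately, leaning heavily on the preparatory material already developed.

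For the ``if'' direction, suppose $w\in\NPal(n)$. Since $w$ is a palindrome, $w=w^R$, and so the prefix-ones and suffix-ones functions coincide: $p_w=p_{w^R}=s_w$ by Remark~\ref{sufpref}. Because $w$ is prefix normal we have $f_w=p_w$, and combining this with $p_w=s_w$ yields $f_w=s_w$, so $w$ is simultaneously prefix normal \emph{and} suffix normal. By Lemma~\ref{leastsuffix}, the prefix normal form of a class is its lexicographically largest element and the suffix normal form is its least representative (the lexicographically smallest). Since $w$ is both, it is simultaneously the largest and the smallest element of its class $[w]_{\equiv}$, which forces the class to be a singleton, i.e.\ $|[w]_{\equiv}|=1$.

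For the ``only if'' direction, suppose $|[w]_{\equiv}|=1$. Every class contains a unique prefix normal form and a unique least representative (which is suffix normal); if the class has exactly one element, these two must be the same word, namely $w$ itself. Hence $w$ is both prefix normal ($f_w=p_w$) and suffix normal ($f_w=s_w$), which gives $p_w=s_w$. The remaining task is to deduce from $p_w=s_w$ that $w$ is actually a palindrome. This is the step I expect to be the main obstacle, since $p_w=s_w$ only records the cumulative $\so$-counts of prefixes and suffixes, not the letters directly. The idea is to recover the letters: for each $i$, the $i$-th letter of $w$ is $\so$ exactly when $p_w(i)-p_w(i-1)=1$, and likewise $w[n-i+1]$ is determined by $s_w(i)-s_w(i-1)$. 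From $p_w=s_w$ one then reads off $w[i]=w[n-i+1]$ for all $i\in[n]$, which is precisely the palindrome condition $w=w^R$. Combined with the prefix normality already established, this places $w$ in $\NPal(n)$ and completes the proof.

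\qed
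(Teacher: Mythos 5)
Your proof is correct. The ``if'' direction (from $w\in\NPal(n)$ to a singleton class) is essentially identical to the paper's: both combine prefix and suffix normality via Lemma~\ref{leastsuffix} to conclude that $w$ is simultaneously the largest and smallest element of its class. Where you diverge is the ``only if'' direction: the paper simply cites \cite{journals/tcs/BurcsiFLRS17} for the fact that a singleton class forces $w\in\NPal(n)$, whereas you give a self-contained argument. Your argument works: every class contains the (unique) prefix normal form and, by Lemma~\ref{leastsuffix} together with Remark~\ref{sufpref}, also contains a suffix normal element (the reversal of the prefix normal form, which is the least representative); in a singleton class both must equal $w$, so $f_w=p_w=s_w$. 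The letter-recovery step is sound, since $w[i]=p_w(i)-p_w(i-1)$ and $w[n-i+1]=s_w(i)-s_w(i-1)$ over the binary alphabet, so $p_w=s_w$ pointwise forces $w=w^R$. What your version buys is independence from the external reference, at the cost of a few extra lines; it is also stylistically consistent with the paper's own proof of Theorem~\ref{palchar}, which uses the same ``read the letters off the difference of the ones-functions'' device. The one point worth making explicit in a final write-up is the existence of a suffix normal element in every class (it is the reversal of the prefix normal form), since that is what licenses the claim that the prefix normal form and the least representative ``must be the same word'' in a singleton class.
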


\begin{proof}
	Already in \cite{journals/tcs/BurcsiFLRS17}, the authors proved that
	$|[w]|_{\equiv}=1$ implies $w\in\NPal(n)$ for $w\in\Sigma^n$. The
	other direction follow immediately from 
	Lemma~\ref{leastsuffix}:  if
	$w$ is a prefix normal palindrome it is by definition prefix normal and by 
$w=w^R$, $w$ is the lexicographically largest and smallest element of the 
class. This implies that the class is a singleton.\qed
\end{proof}

The general part of this section is concluded by a 
somewhat artificial equation which is nevertheless useful for \pnPal s : by $s_w(i)=p_w^R(i)-p_w^R(i+1)+s_w(i-1)$ with $p_w^R(n+1)=0$ for 
$i\in[n]$ and $s_w=p_{w^R}$ we get
\[
p_{w^R}(i)=p_w^R(i)-p_w^R(i+1)-p_{w^R}(i-1).
\]
The rest of the section will cover properties of the \lr s of a 
class.

\begin{remark}\label{rem01}
For completeness, we mention that $\sz^n$ is the only even \lr~w.r.t. 
$\pnEquiv{n}$ and the only \pnPal{} starting with $\sz$.
Moreover, 
 $\so^n$ is the largest \lr{}. 
As we show later in the paper $\sz^n$ and $\so^n$ are of minor interest in the 
recursive process due to their speciality.
\end{remark}
The following lemma is an extension of \cite[Lemma 
1]{journals/tcs/BurcsiFLRS17} for the suffix-one function by relating the prefix 
and the suffix of the word $s_w$ for a 
least representative. Intuitively the suffix normality 
implies that the $\so$s are more at the end of the word $w$ rather than at 
the beginning: consider for instance $s_w=1123345$ for 
$w\in\Sigma^7$. The associated word $w$ cannot be suffix normal since the suffix of 
length two  has only one $\so$ ($s_w(2)=1$) but by $s_w(5)=3$, $s_w(6)=4$, and $s_w(7)=5$ we 
get that within two letters two $\so$s are present and consequently $f_w(2)\geq 
2$. Thus, a word $w$ is only least representative if the amount of $\so$s at 
the end of $s_w$ does not exceed the amount of $\so$s at the beginning of $s_w$.

\begin{lemma}\label{pchar}
Let $w\in\Sigma^n$ be a \lr{}. Then we have 
\[
s_{w}(i)\geq\begin{cases}
s_{w}(n)-s_{w}(n-i+1) & \mbox{if }s_{w}(n-i+1)=s_{w}(n-i),\\
s_{w}(n)-s_{w}(n-i+1)+1 & \mbox{otherwise}.
\end{cases}
\]
\end{lemma}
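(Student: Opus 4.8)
The plan is to reduce everything to the single fact that a \lr{} is suffix normal. By Lemma~\ref{leastsuffix} the word $w$ is suffix normal, so $f_w=s_w$; this is the only property of $w$ I will use. The first step is therefore to reinterpret the right-hand side of the claimed inequality in terms of prefix counts and of the single letter $w[i]$, so that the two-case formulation dissolves into one uniform statement.

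Concretely, I would start from the counting identity $s_w(n)=p_w(i)+s_w(n-i)$ recorded above the lemma. Re-indexing $i\mapsto i-1$ gives $p_w(i-1)=s_w(n)-s_w(n-i+1)$, which identifies the quantity $s_w(n)-s_w(n-i+1)$ appearing on the right as exactly the number of $\so$s in the prefix $w[1..i-1]$. Next I would note that the suffix of length $n-i+1$ is obtained from the suffix of length $n-i$ by prepending the letter $w[i]$, so $s_w(n-i+1)-s_w(n-i)$ equals $1$ if $w[i]=\so$ and $0$ if $w[i]=\sz$. Hence the top case $s_w(n-i+1)=s_w(n-i)$ is precisely the case $w[i]=\sz$, and the \emph{otherwise} case is $w[i]=\so$. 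Combined with the fact that $p_w(i)=p_w(i-1)$ when $w[i]=\sz$ and $p_w(i)=p_w(i-1)+1$ when $w[i]=\so$, both branches of the claim collapse to the single inequality $s_w(i)\geq p_w(i)$.

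It remains to prove $s_w(i)\geq p_w(i)$, and this is where suffix normality enters: the prefix $w[1..i]$ is a factor of $w$ of length $i$, so $p_w(i)=\noo{w[1..i]}\leq f_w(i)=s_w(i)$, using $f_w=s_w$. Reading this back through the identification above yields exactly the two stated cases. I do not expect a genuine obstacle here; the work is entirely in the bookkeeping of the first two steps, namely translating the right-hand side and the case condition into statements about $w[i]$ and the prefix $w[1..i-1]$. Once that translation is in place, the inequality is a one-line consequence of the defining property $f_w=s_w$ of a \lr{} together with the observation that a prefix is a factor.
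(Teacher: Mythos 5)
Your proof is correct and follows essentially the same route as the paper's: both arguments reduce the two cases to the single inequality $s_w(i)\geq p_w(i)$, which holds because the prefix of length $i$ is a factor and $w$ is suffix normal. The only difference is bookkeeping — you identify $s_w(n)-s_w(n-i+1)$ with $p_w(i-1)$ via the counting identity, while the paper writes $s_w(n)=r+s$ with $r=p_w(i)$ and $s=s_w(n-i)$ — and these are trivially equivalent.
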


\begin{proof}
Since $w\in\Sigma^n$ is least representative we have 
$f_w(i)=s_w(i)=|\Suff_i(w)|_{\so}$ and 
$|\Suff_i(w)|_{\so}\geq|\Pref_i(w)|_{\so}$ for all 
$i\in[n]$. Let $i\in[n]$, $|\Suff_{n-i}(w)|_{\so}=s$, and 
$|\Pref_i(w)|_{\so}=r$.
This implies $s_w(n-i)=s$ and $s_w(n)=s+r$. If $s_w(n-i+1)=s_w(n-i)$ then 
$w[i]=\sz$ and $s_w(n-i+1)=s$. This implies 
\[
s_w(i)=|\Suff_i(w)|_{\so}\geq|\Pref_i(w)|_{\so}=r=s+r-s=s_w(n)-s_w(n-i+1).
\]
If $s_w(n-i+1)\neq s_w(n-i)$ then $w[i]=\so$, $s_w(n-i+1)=s+1$ and
\begin{align*}
s_w(i) & =|\Suff_i(w)|_{\so}\geq|\Pref_i(w)|_{\so}
=r=s+r-s=s_w(n)-s_w(n-i)\\
&=s_w(n)-s_w(n-i+1)+1.\tag*{\qed}
\end{align*}
\end{proof}

The remaining part of this section presents results for prefix normal 
palindromes. Notice that for $w\in\NPal(n)$ with $w=\sx v\sx$ with 
$\sx\in\Sigma$, $v$ is not necessarily a \pnPal{}; consider for 
instance $w=\so\sz\so\sz\so$ with $\sz\so\sz\in\Pal(3)\backslash\NPal(3)$. 
The following lemma shows a result for prefix normal 
palindromes which is folklore for palindromes substituting $f_w$ by $p_w$ or 
$s_w$.

\begin{lemma}\label{lempalrec}
For $w\in\NPal(n)\backslash\{\sz^n\}$, $v\in\Pal(n)$  with $w=\so 
v\so$ we have 
\[
f_w(k)= 
    \begin{cases}
        \so & \text{if $k = 1$,}\\
		f_v(k-1) +\so & \text{if $1 < k \leq |w|-1$,}\\
		f_w(|v|+1) + \so & \text{if $k = |w|$.}
	\end{cases}
\]
\end{lemma}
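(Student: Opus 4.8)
The plan is to combine the boundary structure of a non-trivial prefix normal palindrome with the defining identity $f_w=p_w$. First I would pin down the outer letters: since $w\in\NPal(n)\setminus\{\sz^n\}$, Remark~\ref{rem01} forbids $w$ from starting with $\sz$ (the only \pnPal{} doing so is $\sz^n$), so $w[1]=\so$, and as $w=w^R$ also $w[n]=\so$. Thus $v=w[2..n-1]$ is a palindrome of length $n-2$ and $w=\so v\so$ really is the asserted decomposition. The engine of the whole argument is that, being both prefix normal and a palindrome, $w$ satisfies $f_w=p_w=s_w$ (combining prefix normality with $p_w=s_w$ from Remark~\ref{sufpref}); crucially, no such identity is available for the inner word $v$.

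Next I would dispatch the three cases. For $k=1$ we have $\so=w[1]\in\Fact_1(w)$, so $f_w(1)=1$. For $k=n$ the only length-$n$ factor is $w$ itself, whence $f_w(n)=\noo{w}=p_w(n)=p_w(n-1)+1$ because $w[n]=\so$; prefix normality rewrites $p_w(n-1)$ as $f_w(n-1)$, and since $|v|+1=n-1$ this is exactly $f_w(|v|+1)+\so$. For the middle range $1<k\le n-1$ I would push the factor-maximum onto the prefix via prefix normality and then peel off the leading $\so$. Because $k-1\le n-2=|v|$ we may write $w[1..k]=\so\,v[1..k-1]$, and therefore
\[
f_w(k)=p_w(k)=\noo{w[1..k]}=1+\noo{v[1..k-1]}=p_v(k-1)+\so,
\]
where, by $v=v^R$, the count $p_v(k-1)$ equals $s_v(k-1)$; this coincides with the asserted $f_v(k-1)+\so$ exactly when $v$ is itself prefix normal.

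The step I expect to be the main obstacle is precisely this last reconciliation in the middle case, where the inner quantity must be matched with the $f_v$-term in the claim. This is the subtlety flagged just before the lemma, that $v=w[2..n-1]$ need not be prefix normal: detached from $w$, an interior window of $v$ may carry more $\so$'s than the prefix count $p_v(k-1)$, so one cannot blindly substitute $v$'s maximum-ones value for its prefix count. The way I would keep the bookkeeping safe is to argue entirely on the level of $w$, using that prefix (and, by $w=w^R$, suffix) normality of $w$ forces a maximising length-$k$ factor to be realisable as the boundary prefix $w[1..k]$; peeling its outer $\so$ then returns precisely the prefix of $v$ of length $k-1$. Identifying this boundary-aligned inner factor with the term appearing in the recursion is where the \pnPal{} hypothesis on $w$ is indispensable, and it is the point I would check most carefully.
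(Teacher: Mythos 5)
Your treatment of the cases $k=1$ and $k=|w|$, and your reduction of the middle case to $f_w(k)=p_w(k)=\noo{\Pref_{k-1}(v)}+\so=p_v(k-1)+\so$, is exactly the paper's computation. The difference is the final substitution: the paper's proof simply writes $f_v(k-1)=|\Pref_{k-1}(v)|_{\so}$, i.e.\ $f_v(k-1)=p_v(k-1)$, without comment, whereas you stop and observe that this requires $v$ itself to be prefix normal, which is not among the hypotheses. You are right to stop: that substitution is not a technical nuisance to be engineered around but is false in general, and with it the middle case of the lemma as stated. Take $w=\so\sz\so\sz\so\in\NPal(5)$, so that $v=\sz\so\sz\in\Pal(3)\backslash\NPal(3)$ (the very example the paper gives just before the lemma). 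Then $f_w(2)=1$, while $f_v(1)+\so=1+1=2$, since $v$ contains the factor $\so$ but its prefix of length $1$ is $\sz$. What your argument (and the paper's, read charitably) actually proves is $f_w(k)=p_v(k-1)+\so=s_v(k-1)+\so$ for $1<k\leq|w|-1$; the stated version with $f_v$ holds only under the additional assumption $v\in\NPal(n-2)$.

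Consequently your proposed rescue --- arguing ``entirely on the level of $w$'' that a maximising length-$k$ factor of $w$ can be realised as the boundary prefix $w[1..k]$ --- cannot close the gap. That step is sound and gives $f_w(k)=\noo{w[1..k]}=1+\noo{v[1..k-1]}$, but the obstruction sits on the other side of the equation: $f_v(k-1)$ is computed over \emph{all} factors of the standalone word $v$, and an interior window of $v$ may carry more $\so$s than $v$'s own prefix, inflating $f_v(k-1)$ strictly above $p_v(k-1)$ no matter how the factors of $w$ are arranged. So your derivation is correct as far as it goes, and your diagnosis pinpoints exactly the step at which the paper's own proof is unjustified; but no proof of the statement as written exists --- the middle case needs $f_v$ replaced by $p_v$ (equivalently $s_v$, as $v$ is a palindrome), or a stronger hypothesis on $v$, before either your argument or the paper's goes through.
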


\begin{proof}
For $k=1$ we have $f_w(1)=|\Pref_1(w)|_1=1$ since $w\neq\sz^n$. For  
$k\in[|w|-1]_{>1}$ we have
\[
f_v(k-1)+\so=|\Pref_{k-1}(v)|_1+\so=|\Pref_k(w)|_1=p_w(k).
\]
Finally we have 
\begin{align*}
f_w(|v|+1)+\so&=|\Pref_{|v|+1}(w)|_1+\so=|\Pref_{|w|-1}
(w)|_1+\so=f_w(|w|-1)+\so\\
&=f_w(|w|).\tag*{\qed}
\end{align*}
\end{proof}

In the following we give a characterisation of when a palindrome $w$ is prefix 
normal  
depending on its maximum-ones function $f_w$ and a derived function $\overline{f_w}$. In particular we
observe that $f_w = \overline{f_w}^R$ if and only if $w$ is a prefix normal
palindrome. Intuitively $\overline{f_w}$ captures the progress of
$f_w$ in reverse order.  This is an intriguing result because it shows
that properties regarding prefix and suffix normality can be observed
when  $f_w,s_w,p_w$ are considered in their serialised representation.

\begin{definition} \label{fquer}
For $w\in\Sigma^{n}$ define $\overline{f}_w:[n]\rightarrow[n]$ by  
$\overline{f}_w(k)=\overline{f}_w(k-1)-(f_w(k-1)-f_w(k-2))$
with the extension $f_w(-1)=f_w(0)=0$ of $f$  and $\overline{f}_w(0) = f_w(n)$.
Define $\overline{p}_w$ and $\overline{s}_w$ analogously.
\end{definition}

\begin{example}
  Consider the \pnPal{} $w=\so\so\sz\so\so$ with
  $f_w=12234$.  Then $\overline{f}_w$ is $43221$ and we have
  $f_w=\overline{f}_w^R$. On the other hand for
  $v=\so\sz\so\so\sz\so\in\Pal(6)\backslash\NPal(6)$ we have
  $p_v=112334$ and $f_v = 122334$ and  $\overline{f}_v=432211$ and thus
  $\overline{f}_v^R\neq f_v$.  
\end{example}
The following lemma shows a connection
between the reversed prefix-ones function and the suffix-ones
function that holds for all palindromes.

\begin{lemma}\label{swopw}
	For $w \in \mathrm{Pal}(n)$ we have $s_w \equiv \overline{p}_w^R$.
\end{lemma}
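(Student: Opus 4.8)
The plan is to set aside the palindrome hypothesis and instead extract a closed form for $\overline{p}_w$ straight from its defining recurrence. Instantiating Definition~\ref{fquer} for $\overline{p}_w$, the recurrence $\overline{p}_w(k)=\overline{p}_w(k-1)-(p_w(k-1)-p_w(k-2))$ telescopes: summing the increments from $1$ to $k$ collapses the inner differences to $p_w(k-1)-p_w(-1)$, so that $\overline{p}_w(k)=\overline{p}_w(0)-p_w(k-1)+p_w(-1)$. Feeding in the seed $\overline{p}_w(0)=p_w(n)$ and the extension $p_w(-1)=0$ gives the closed form
\[
\overline{p}_w(k)=p_w(n)-p_w(k-1)\qquad\text{for all }k\in[n].
\]
First I would nail this down by a one-line induction on $k$: the base case $k=1$ uses $p_w(0)=p_w(-1)=0$, and the inductive step merely substitutes the recurrence into the hypothesis.

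Next I would reverse. By the definition of the reversed function $g^R(k)=g(n-k+1)$, applying it to the closed form yields $\overline{p}_w^R(k)=\overline{p}_w(n-k+1)=p_w(n)-p_w(n-k)$ for $k\in[n]$. It then remains only to identify the right-hand side with $s_w(k)$, and this is exactly the prefix/suffix split-counting identity $p_w(n)=p_w(n-k)+s_w(k)$ recorded at the beginning of this section (the $\so$s of $w$ partition into those in the length-$(n-k)$ prefix and those in the length-$k$ suffix). Rearranging gives $s_w(k)=p_w(n)-p_w(n-k)$, and comparing the two expressions proves $s_w(k)=\overline{p}_w^R(k)$ on the shared domain $[n]$, which is the assertion (the symbol $\equiv$ rather than $=$ accounting for $s_w$ additionally being defined at $0$).

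I do not expect a genuinely hard step: the proof is a telescoping calculation followed by one counting identity. The only places that demand care are the boundary conventions — the extension $p_w(-1)=p_w(0)=0$ and the seed $\overline{p}_w(0)=p_w(n)$ — and the index arithmetic in the reversal, where an off-by-one in $n-k+1$ would shift the argument of $p_w$ and break the match. I would also flag that the hypothesis $w\in\Pal(n)$ is never invoked; the identity $s_w=\overline{p}_w^R$ in fact holds for every binary word, and the palindrome restriction serves only the intended use, where $p_w=s_w$ collapses the statement to the self-referential $p_w=\overline{p}_w^R$.
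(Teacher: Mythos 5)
Your proof is correct, and it takes a cleaner and strictly more general route than the paper's. The paper proves the identity by a downward induction on the argument, starting from $\overline{p}_w^R(n)=|w|_{\so}=s_w(n)$ and stepping from $n-i+1$ to $n-i$; in each step it rewrites the increment $p_w(i)-p_w(i-1)$ as $s_w(i)-s_w(i-1)$ and trades $w[i]$ for $w[n-i+1]$, i.e.\ it threads the palindrome hypothesis $p_w=s_w$ and the symmetry $w[i]=w[n-i+1]$ through every inductive step (where the two uses in fact cancel). You instead telescope the defining recurrence once to the closed form $\overline{p}_w(k)=p_w(n)-p_w(k-1)$, reverse indices to get $\overline{p}_w^R(k)=p_w(n)-p_w(n-k)$, and finish with the single counting identity $p_w(n)=p_w(n-k)+s_w(k)$ stated at the start of Section~3. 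What this buys is twofold: the argument is a two-line computation rather than an index-juggling induction, and it makes visible that the hypothesis $w\in\Pal(n)$ is never used -- the identity $s_w\equiv\overline{p}_w^R$ holds for every binary word (one checks e.g.\ $w=\so\so\sz\sz$: $\overline{p}_w=2\cdot1\cdot0\cdot0$ and $s_w=0\cdot0\cdot1\cdot2$). This generality is harmless here, since the lemma is only ever applied to palindromes (where it combines with $p_w=s_w$ to give $p_w\equiv\overline{p}_w^R$), but your observation is a genuine sharpening of the statement. Your handling of the boundary conventions ($p_w(-1)=p_w(0)=0$, seed $\overline{p}_w(0)=p_w(n)$, and the domain mismatch at $0$ explaining the symbol $\equiv$) is exactly the care the calculation requires.
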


\begin{proof}
	Let $w \in \mathrm{Pal}(n)$.  We 
get $\overline{p}_w^R(n) = p_w^R(1) = |w|_1 = s_w(n)$. Now let $i \in [n]_0$. 
Assume $s_w(n-i+1) = \overline{p}_w^R(n-i+1)$ holds. We have by induction
	\begin{align*}
	\overline{p}_w^R(n-i) &= \overline{p}_w(n-(n-i)+1) = \overline{p}_w(i+1)\\
	&= \overline{p}_w(i) - \left(p_w(i)-p_w(i-1)\right)\\
	&= \overline{p}_w^R(n-i+1) + (-s_w(i) + s_w(i-1))\\
	&= s_w(n-i+1)-w[i]\\
	&=s_w(n-i)+w[n-i+1]-w[i]\\
	&=s_w(n-i).\tag*{\qed}
	\end{align*}
\end{proof}

By Lemma~\ref{swopw} we get $p_w\equiv\overline{p}_w^R$ since $p_w\equiv s_w$ 
for a palindrome $w$.  As advocated earlier, our main theorem of this part
(Theorem~\ref{palchar}) gives a characterisation of \pnPal s. The theorem allows us to decide if a word is a \pnPal{} by only looking at the
maximum-ones-function, thus a comparison of all factors is not required.
\begin{theorem}\label{palchar}
Let $w \in \Sigma^n \setminus \Set{0^n}$. Then $w$ is a \pnPal{} if and only if $f_w = 
\overline{f}^\mathsf{R}_w$.
\end{theorem}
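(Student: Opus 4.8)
The plan is to reduce the functional identity $f_w=\overline{f}_w^R$ to a single symmetry condition on $f_w$, and then to read off both palindromicity and prefix normality from that condition. First I would solve the recurrence of Definition~\ref{fquer} in closed form: telescoping, using the boundary value $\overline{f}_w(0)=f_w(n)$ and the extension $f_w(-1)=f_w(0)=0$, gives $\overline{f}_w(k)=f_w(n)-f_w(k-1)$ for every $k\in[n]$, and therefore
\[
\overline{f}_w^R(k)=\overline{f}_w(n-k+1)=f_w(n)-f_w(n-k).
\]
Thus $f_w=\overline{f}_w^R$ is equivalent to the symmetry
\begin{equation*}
f_w(k)+f_w(n-k)=f_w(n)\qquad\text{for all }k\in[n]. \tag{$\star$}
\end{equation*}
The theorem then reads: for $w\neq\sz^n$, $w$ is a \pnPal{} if and only if $(\star)$ holds. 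Throughout the proof I would lean on the two prefix/suffix complement identities stated before the theorem, rewritten (using $p_w(n)=s_w(n)=f_w(n)$) as $p_w(i)=f_w(n)-s_w(n-i)$ and $s_w(i)=f_w(n)-p_w(n-i)$, together with $p_w(i),s_w(i)\le f_w(i)$ from Remark~\ref{sufpref}.

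For the forward direction I would assume $w$ is a \pnPal{}. Prefix normality gives $f_w=p_w$ and palindromicity gives $p_w=s_w$, so $f_w=p_w=s_w$; substituting into $p_w(i)=f_w(n)-s_w(n-i)$ yields $f_w(i)=f_w(n)-f_w(n-i)$, which is exactly $(\star)$. (Alternatively, this direction is one line from the observation preceding the theorem that $p_w=\overline{p}_w^R$ for palindromes via Lemma~\ref{swopw}: since $f_w=p_w$ forces $\overline{f}_w=\overline{p}_w$, we get $f_w=p_w=\overline{p}_w^R=\overline{f}_w^R$.)

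The substantive direction is the converse. Assume $(\star)$. The key move is a squeezing argument that extracts both properties from the single symmetry on $f_w$. Combining $p_w(n-i)\le f_w(n-i)$ with the complement identity for $s_w$ gives
\[
p_w(i)\le f_w(i)=f_w(n)-f_w(n-i)\le f_w(n)-p_w(n-i)=s_w(i),
\]
and symmetrically, using $s_w(n-i)\le f_w(n-i)$ and the complement identity for $p_w$, one gets $s_w(i)\le p_w(i)$; hence $p_w=s_w$, and since equal prefix- and suffix-ones functions force $w[k]=w[n-k+1]$ for all $k$ by taking successive differences, $w$ is a palindrome. It remains to upgrade $p_w\le f_w$ to equality: if $p_w(k)<f_w(k)$ for some $k$, then $p_w(k)=f_w(n)-s_w(n-k)=f_w(n)-p_w(n-k)$ (using $p_w=s_w$) together with $f_w(k)=f_w(n)-f_w(n-k)$ forces $p_w(n-k)>f_w(n-k)$, contradicting Remark~\ref{sufpref}. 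Thus $p_w=f_w$, so $w$ is prefix normal, and with palindromicity $w$ is a \pnPal{}.

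The main obstacle is precisely this squeezing step: $(\star)$ constrains only $f_w$, yet we must recover $p_w$ and $s_w$ individually, and the trick is that the complement identities relate $p_w$ and $s_w$ at index $i$ to one another at index $n-i$, so that pairing them with the two inequalities $p_w,s_w\le f_w$ pins every value down. Finally I would note that the excluded word $\sz^n$ is a degenerate boundary case satisfying both sides trivially ($f_{\sz^n}\equiv 0$ satisfies $(\star)$ and $\sz^n$ is vacuously a prefix normal palindrome), so the hypothesis $w\neq\sz^n$ is a convenience rather than a genuine restriction.
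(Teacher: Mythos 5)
Your proof is correct, but it takes a genuinely different route from the paper's on the harder direction. The paper proves the converse by passing to the least representative $v$ of $[w]_{\equiv}$ (using that $f$, and hence $\overline{f}^R$, is a class invariant), reading the letters of $v$ off the differences $f_v(i)-f_v(i-1)$ to show $v$ is a palindrome, and then invoking the fact that prefix normal palindromes form singleton classes (Proposition~\ref{corLol}) to conclude $w=v$. You instead solve the recurrence of Definition~\ref{fquer} in closed form, $\overline{f}_w(k)=f_w(n)-f_w(k-1)$, so that the hypothesis becomes the symmetry $f_w(k)+f_w(n-k)=f_w(n)$, and then squeeze: combining this with $p_w,s_w\le f_w$ and the complement identities $p_w(i)=f_w(n)-s_w(n-i)$ and $s_w(i)=f_w(n)-p_w(n-i)$ pins down $p_w=s_w=f_w$ directly. (In fact your first chain already yields $f_w(i)\le s_w(i)$, hence $f_w=s_w$, so your separate ``upgrade'' step for prefix normality is redundant, though correct.) Your argument is more elementary and self-contained---it needs neither the singleton-class result nor any appeal to least representatives---and the closed form for $\overline{f}_w$ makes transparent what the condition $f_w=\overline{f}_w^R$ actually says. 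What the paper's route buys is the re-use of machinery it has already established and a letter-by-letter reading on a canonical representative, which matches the recursive framework of the rest of the paper. Both proofs agree on the easy direction, and your closing remark about $\sz^n$ is accurate: both sides of the equivalence hold for it, so the exclusion is only a formality (e.g., because $\overline{f}_w$ is declared with codomain $[n]$).
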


\begin{proof}
	Let $w\in\Sigma^n$. By definition of $\NPal(n)$, $w$ is prefix normal and a 
	palindrome, i.e. $s_w=f_w$. By Lemma~\ref{swopw} and Definition~\ref{fquer} we get 
	$f_w=s_w=\overline{p}_w^R=\overline{f}_w^R$. This proves $\Rightarrow$.
	Let $w \in \Sigma^n \setminus \Set{0^n}$ such that $f_w = 
	\overline{f}^\mathsf{R}_w$. If $w = \epsilon$, 
	then obviously $w \in \NPal(n)$ holds. Otherwise, if $w \neq \epsilon$, there exists a least representative $v \in \Sigma^n$ with $w \in [v]_{\equiv}$. This implies $f_v = f_w = \overline{f}_w^R = \overline{f}_v^R$, therefore the assumption also holds for $v$. Firstly, we will prove that $v$ is a palindrome. 
	Let $x \in \Set{\sz,\so}$ and $i \in [n]$. Thus we have $f_v(i-1) + x = f_v(i)$. Since $v$ is least representative this implies $v[i] = x$. By the assumption we get $\overline{f}_v^R(i-1) +x = \overline{f}_v^R(i)$ and applying the definition of the reversal and $\overline{f}_v$ leads to
	\begin{align*}
	\overline{f}_v(n - i + 1) &= \overline{f}_v^R(i)\\ 
	&= \overline{f}_v^R(i-1) +x\\
	&= \overline{f}_v(n - i +2 ) +x\\
	&= \overline{f}_v(n-i+1) - \left(f_v(n-i+1) - f_v(n-i)\right) +x.
	\end{align*}
	Hence we get $f_v(n-i+1) = f_v(n-i) + x$, i.e. $v[n-i+1] = x$. Thus $v[n-i+1] = v[i]$ and therefore $v$ is a palindrome. As proven in \cite{journals/tcs/BurcsiFLRS17}, prefix normal (and thus suffix 
	normal) palindromes are not pn-equivalent to any different word. 
Consequently $v = w$ and $w \in \NPal(n)$.  \qed
\end{proof}

\begin{table}[t]
  \centering
  \begin{tabularx}{\textwidth}{Y Y Y Y Y Y Y Y Y Y Y Y Y Y Y Y}	
	\scriptsize $i$ &\scriptsize  $1$ &\scriptsize  $2$ &\scriptsize  $3$ & \scriptsize $4$ & \scriptsize $5$ & \scriptsize $6$ & \scriptsize $7$& \scriptsize $8$ & \scriptsize $9$& \scriptsize $10$& \scriptsize $11$& \scriptsize $12$& \scriptsize $13$& \scriptsize $14$& \scriptsize $15$\\
	\cmidrule(r){2-16} 
	   \small\# & \small$2$ & \small$2$ &\small$3$ &\small$3$ &\small$5$ &\small$4$ &\small$8$ &\small$7$ &\small$12$ &\small$11$ &\small$21$ &\small$18$ &\small$36$ &\small$31$ &\small$57$\\
	 \cmidrule(r){2-16} 
\end{tabularx}
~\par~\par

\begin{tabularx}{\textwidth}{Y Y Y Y Y Y Y Y Y Y Y Y Y Y Y Y}	
	\scriptsize $i$ &\scriptsize  $16$ &\scriptsize  $17$ &\scriptsize  $18$ & \scriptsize $19$ & \scriptsize $20$ & \scriptsize $21$ & \scriptsize $22$& \scriptsize $23$ & \scriptsize $24$& \scriptsize $25$ & \scriptsize $26$ & \scriptsize $27$ & \scriptsize $28$ & \scriptsize $29$ & \scriptsize $30$\\
	\cmidrule(r){2-16} 
\small	\# &\small$55$ &\small$104$ &\small$91$ &\small$182$ &\small$166$ &\small$308$ &\small$292$ &\small$562$ &\small$512$ &\small$1009$ &\small $928$ & \small $1755$ & \small$1697$ & \small $3247$ & \small$2972$\\
	\cmidrule(r){2-16} 
\end{tabularx}
\caption{Number of \pnPal s. \cite{sloane2003line} (A308465)}
\label{tab:nbpalin}
\vspace*{-0.5cm}
\end{table}
\autoref{tab:nbpalin} presents the amount of \pnPal s up to 
length $30$.
These results support the conjecture in \cite{journals/tcs/BurcsiFLRS17} that there is a different behaviour for even and odd length of the word.

\section{Recursive Construction of Prefix Normal Classes}

In this section we investigate how to generate \lr s
of length $n+1$ using the the \lr s of length
$n$. This is similar to the work of
\citet{journals/corr/abs-1805-12405} except they investigated appending a 
letter to  prefix normal words while  we explore the behaviour on
prepending letters to \lr s. Consider the words 
$v = \so\sz\sz\so$ and $ w = \sz\sz\so\so$, both being 
(different) \lr s of length $4$. Prepending a $\so$
to them leads to $\so\so\sz\sz\so$ and $\so\sz\sz\so\so$ which are
 pn-equivalent. We say that $v$ and $w$ \emph{collapse} and denote
 it by
 $v\leftrightarrow w$.  
Hence for 
determining the index of $\equiv_n$ based on the least representatives of 
length $n-1$, only the least representative of one class matters. 

\begin{definition}
Two words $w,v\in\Sigma^n$ {\em collapse} if $\so w\pnEquiv{n+1}\so v$ holds. This 
is denoted by 
$w\leftrightarrow v$.
\end{definition}

\noindent  Prepending a $\so$ to a non \lr{} will never lead to a \lr{}. Therefore It is sufficient to only look at \lr s. Since collapsing is an equivalence relation, denote the equivalence 
class 
w.r.t. $\leftrightarrow$ of a word $w\in\Sigma^{\ast}$ by 
$[w]_{\leftrightarrow}$. Next, we present some general results regarding the 
connections between 
the \lr s of lengths $n$ and $n+1$.
As mentioned in Remark~\ref{rem01}, $\sz^n$ and $\so^n$ are for all $n\in\N$ 
\lr s. This implies that they do not have to be considered in 
the recursive process. 

\begin{remark}\label{prepend0}
By \cite{journals/corr/abs-1805-12405} a word  
$w\sz\in\Sigma^{n+1}$ is prefix-normal 
if $w$ is prefix-normal. Consequently we know that if a word $w\in\Sigma^n$ is 
suffix normal, $\sz w$ is suffix normal as well. This leads in accordance to the 
naïve upper bound of $2^{n}+1$ to a naïve lower bound of 
$|\Sigma^n/\equiv_n|$ for $|\Sigma^{n+1}/\equiv_{n+1}|$. 
\end{remark}

\begin{remark}\label{trivialextensionresults}
The maximum-ones functions for $w\in\Sigma^{\ast}$ and $\sz w$ are equal on 
all $i\in[|w|]$ and $f_{\sz w}(|w|+1)=f_w(|w|)$ since the factor determining 
the maximal number of $\so$'s is independent of the leading $\sz$. Prepending  
$\so$ to a word $w$ may result in a difference between $f_w$ and $f_{\so 
w}$, but notice that since only one $\so$ is prepended, we always have $f_{\so 
w}(i)\in\{f_{w}(i),f_{w}(i)+1\}$ for all $i\in [n]$. In both cases we have 
$s_w(i)=s_{\sx w}(i)$ for $\sx\in\{\sz,\so\}$ and $i\in[|w|]$ and $s_{\sz 
w}(n+1)=s_w(n)$ as well as $s_{\so w}(n+1)=s_w(n)+1$.
\end{remark}
Firstly we improve the naïve upper bound to $2|\Sigma^{n}/\equiv_n|$ by proving that only \lr s in $\Sigma^n$ can become \lr s in $\Sigma^{n+1}$ by prepending $\so$ or $\sz$.

\begin{proposition}\label{non-least}
Let $w\in\Sigma^n$ not be \lr{}. Neither $\sz w$ nor $\so 
w$ are \lr s in $\Sigma^{n+1}$.
\end{proposition}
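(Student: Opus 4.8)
The plan is to reduce ``being a \lr{}'' to ``being suffix normal'' and then to track a single witness of non-suffix-normality through the prepending operation. By Lemma~\ref{leastsuffix} the \lr{} of a class is exactly its suffix normal word, so a word $w\in\Sigma^n$ fails to be a \lr{} precisely when $f_w\neq s_w$. Since $s_w(i)\leq f_w(i)$ for all $i$ by Remark~\ref{sufpref}, this means there is an index $i\in[n]$ (necessarily $i\geq 1$, as $s_w(0)=f_w(0)=0$) with $s_w(i)<f_w(i)$. I would fix such a witness $i$ once and for all and show that it still witnesses non-suffix-normality for both $\sz w$ and $\so w$.

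The key observation is that prepending a letter does not disturb the short suffixes while it can only enlarge the maximum-ones counts. Concretely, by Remark~\ref{trivialextensionresults} we have $s_{\sx w}(i)=s_w(i)$ for every $\sx\in\{\sz,\so\}$ and every $i\in[n]$, and moreover $f_{\sz w}(i)=f_w(i)$ and $f_{\so w}(i)\in\{f_w(i),f_w(i)+1\}$, so in particular $f_{\sx w}(i)\geq f_w(i)$ on $[n]$ in both cases. Evaluating at the fixed witness index $i$ therefore gives $s_{\sx w}(i)=s_w(i)<f_w(i)\leq f_{\sx w}(i)$, whence $s_{\sx w}\neq f_{\sx w}$ and so $\sx w$ is not suffix normal. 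Appealing once more to Lemma~\ref{leastsuffix}, neither $\sz w$ nor $\so w$ is a \lr{} in $\Sigma^{n+1}$, which is the claim.

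Since both cases collapse to the same one-line inequality, there is essentially no hard step; the only points to be careful about are the direction of the inequality $s_w(i)\leq f_w(i)$ (so that the defect is a strict drop of $s$ below $f$, not the reverse) and the fact that the witness index $i$ lies in $[n]$, so that the cited equalities and bounds of Remark~\ref{trivialextensionresults}, which only concern indices up to $|w|=n$, actually apply. I would make the reduction ``\lr{} iff suffix normal'' explicit at the very start, since the whole argument hinges on it.
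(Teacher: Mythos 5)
Your proof is correct and takes essentially the same route as the paper's: both reduce ``being a \lr{}'' to suffix normality ($f_w=s_w$) via Lemma~\ref{leastsuffix} and then use Remark~\ref{trivialextensionresults} to see that prepending a letter preserves the suffix-ones values on $[n]$ while never decreasing the maximum-ones values, so a witness index with $s_w(i)<f_w(i)$ survives. The only difference is organizational: the paper splits into the cases $\sz w$ and $\so w$ and argues by contradiction with a minimal witness, whereas you handle both letters uniformly with one fixed witness, which is if anything slightly cleaner.
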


\begin{proof}
Suppose firstly $\sz w$ is a least representative, i.e. $f_{\sz w}(i)=s_{\sz 
w}(i)$ for all $i\in[n+1]$. By $s_{\sz w}(i)=s_{w}(i)$ and 
$f_{\sz w}(i)=f_{w}(i)$ for $i\in[n]$, we have $s_w(i)=f_w(i)$ and thus $w$ 
would be a least representative. Suppose that secondly $\so w$ is a least 
representative. Since $w$ is not a least representative there exists a 
$j\in[|w|]$ with $s_w(j)\neq f_w(j)$. Choose $j$ minimal.
Since $\so w$ is a least representative, we get 
\[
f_{\so w}(j)=s_{\so w}(j)=s_w(j)\neq f_w(j).
\]
By Remark~\ref{trivialextensionresults} we have $f_{\so w}(j)=f_w(j)+1$
$s_w(j)\leq f_w(j)$ implies 
$f_w(j)\geq s_w(j)=f_{\so w}(j)=f_w(j)+1$ - a contradiction. \qed
\end{proof}

By Proposition~\ref{corLol} prefix (and thus suffix)
normal palindromes form a singleton class. This 
implies immediately that a word $w\in\Sigma^n$ such that $\so w$ is a prefix 
normal palindrome, does not collapse with any other 
$v\in\Sigma^n\backslash\{w\}$. The next lemma shows that even prepending once a 
$\so$ and once a $\sz$ to different words leads only to equivalent words in one 
case.

\begin{lemma}\label{falsecollapse}
Let $w,v\in\Sigma^n$ be different \lr s. Then $\sz w\equiv_n\so v$ if and only 
if $v=\sz^n$ and $w=\sz^{n-1}\so$. 
\end{lemma}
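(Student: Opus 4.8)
The plan is to prove both directions of the biconditional, with the ``if'' direction being a routine verification and the ``only if'' direction carrying the real content. For the easy direction, assume $v=\sz^n$ and $w=\sz^{n-1}\so$. Then $\sz w = \sz^n\so$ and $\so v = \so\sz^n$, and I would simply observe that both are words of length $n+1$ containing exactly one $\so$, hence abelian equivalent with a single $\so$; since any two words with the same single-letter content trivially satisfy $f_{\sz w}=f_{\so v}$ (the maximal number of $\so$s in any factor of length $k$ is $1$ for $1\le k\le n+1$ in both cases, and $0$ for $k=0$), we get $\sz w\equiv_{n+1}\so v$. Note the statement writes $\equiv_n$ but the collapse is with respect to length $n+1$; I would read this as $\pnEquiv{n+1}$ throughout.

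For the ``only if'' direction, suppose $\sz w\pnEquiv{n+1}\so v$, i.e.\ $f_{\sz w}=f_{\so v}$. The first step is to extract information at the top coordinate. By Remark~\ref{trivialextensionresults} we have $f_{\sz w}(n+1)=f_w(n)$, while $f_{\so v}(n+1)\in\{f_v(n),f_v(n)+1\}$; since both $w$ and $v$ are least representatives, $f_w(n)=s_w(n)=|w|_\so$ and $f_v(n)=|v|_\so$. Equality at coordinate $n+1$ forces $|w|_\so=|v|_\so$ or $|w|_\so=|v|_\so+1$. The next step is to use the full equality $f_{\sz w}(i)=f_{\so v}(i)$ for all $i\in[n]$, combined with $f_{\sz w}(i)=f_w(i)$ and $f_{\so v}(i)\in\{f_v(i),f_v(i)+1\}$ from the same remark. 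This sandwiches $f_w(i)$ between $f_v(i)$ and $f_v(i)+1$ for every $i$, so prepending $\so$ to $v$ must increase the count at exactly those positions where $f_w$ exceeds $f_v$. Since $w$ and $v$ are \emph{different} least representatives but are being compared through this one-sided shift, I expect the constraints to be so rigid that $v$ can only be $\sz^n$.

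The main obstacle, and the heart of the argument, is showing that $v=\sz^n$ rather than merely that $v$ and $w$ have closely related $f$-functions. The key idea I would pursue is that prepending $\so$ to $v$ can raise $f_v(i)$ to $f_v(i)+1$ for a given $i$ only if the prefix of $\so v$ of length $i$ is a factor achieving the new maximum, which (since $v$ is suffix normal, so its $\so$s sit at the end) happens only when $v$ itself is very $\sz$-heavy at the front. I would formalize this by noting that $f_{\so v}(i)=f_v(i)+1$ requires $p_{\so v}(i)=f_v(i)+1$, i.e.\ the length-$i$ prefix of $\so v$ beats every length-$i$ factor of $v$; because $v$ is a least representative its own prefixes carry the fewest $\so$s, so $\so$ followed by a prefix of $v$ can only win when that prefix is all-$\sz$ up to a point, pinning down the structure of $v$. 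Carrying this through should show $v=\sz^n$, whence $f_v$ is identically $0$ on $[n]$, and then $f_w(i)=f_{\sz w}(i)=f_{\so v}(i)\in\{0,1\}$ with $f_w(n)=|w|_\so$; combined with $w$ being a least representative (suffix normal, hence $\so$s only at the end) this forces $w$ to have exactly one $\so$ at its final position, i.e.\ $w=\sz^{n-1}\so$. I would double-check the boundary coordinate $n+1$ separately, since that is where the single prepended $\so$ either does or does not create a new maximal factor, and it is exactly the coordinate distinguishing the two admissible words.
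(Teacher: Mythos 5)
Your $\Leftarrow$ direction and your closing step (deducing $w=\sz^{n-1}\so$ once $v=\sz^n$ is known) are fine, but the step you yourself identify as the heart of the argument --- forcing $v=\sz^n$ --- rests on a claim that is false. You assert that $f_{\so v}(i)=f_v(i)+1$ can happen ``only when that prefix is all-$\sz$ up to a point''. Take $v=\so\sz\so$, a least representative: the length-$1$ prefix of $v$ is $\so$, yet $f_{\so v}(2)=2=f_v(2)+1$, because the new prefix $\so\so$ of $\so v$ beats every length-$2$ factor of $v$. What suffix normality actually gives you is only $p_v(i-1)\le f_v(i-1)\le f_v(i)$, so $f_{\so v}(i)=f_v(i)+1$ forces $p_v(i-1)=f_v(i-1)=f_v(i)$ --- a plateau condition, not an all-$\sz$ prefix --- and this does not pin down $v$ at a generic coordinate $i$.

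The argument has to be anchored at the two top coordinates, and it needs the exact values there, not the disjunction you allow. Since the only factor of $\so v$ of length $n+1$ is $\so v$ itself, $f_{\so v}(n+1)=|v|_{\so}+1$ exactly; equating this with $f_{\sz w}(n+1)=f_w(n)=s_w(n)=|w|_{\so}$ gives $|w|_{\so}=|v|_{\so}+1$, so your alternative branch $|w|_{\so}=|v|_{\so}$ cannot occur. Now look at $i=n$: $f_{\so v}(n)=f_w(n)=|w|_{\so}=|v|_{\so}+1$, so some $u\in\Fact_n(\so v)$ has $|v|_{\so}+1$ ones; since $v$ itself has only $|v|_{\so}$ ones, $u$ must be the prefix $\so v[1..n-1]$, whence $v[1..n-1]$ already contains all $|v|_{\so}$ ones of $v$ and $v[n]=\sz$. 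Suffix normality of $v$ then gives $f_v(1)=s_v(1)=0$, i.e.\ $v=\sz^n$, and your final paragraph finishes from there. This coordinate-$n$ step is exactly how the paper closes the argument; without it your proof does not go through.
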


\begin{proof}
The equivalence of $\sz \sz^{n-1}\so=\sz^n\so$ and $\so\sz^{n}$ is immediate. This proves the $\Leftarrow$-direction. For the other direction assume $\sz w\equiv_n\so v$. By definition we get $f_{\sz w}(i)=f_{\so v}(i)$ for all $i\in[n+1]$ and moreover by Remark~\ref{trivialextensionresults} $f_w(i)=f_{\so v}(i)$ for all $i\in[n]$. 
By $s_w(1)=f_w(1)=f_{\so v}(1)=1$ we get $w[|w|]=\so$ and by $s_w(n)=f_{\so v}(n)=|w|_{\so}$ there exists $u\in\Fact_n(\so v)$ with $|u|_{\so}=|w|_{\so}$. The equivalence of $\sz w$ and $\so v$ implies $|w|_{\so}=|v|_{\so}+1$ and thus $u$ has to be a prefix of $\so v$. Hence $u[2 .. n]$ is a prefix of $v$ of length $n-1$ with $|w|_{\so}-1$ $\so$s. Since this is the overall amount of $\so$ in $v$, $v[n]=0$ follows and implies immediately $v=\sz^n$. By $s_w(1)=f_{\so v}(1)=1$ follows $w[n]=\so$ and the claim follows with $|\sz w|_{\so}=|\so v|_{\so}$.\qed
\end{proof}

By Lemma~\ref{falsecollapse} and Remark~\ref{prepend0} it suffices to 
investigate the collapsing relation on prepanding $\so$s.  The following 
proposition characterises the \lr{} $\so w$ among the elements $\so v\in[\so 
w]_\equiv$ for all \lr s $v\in\Sigma^{n}$ with $w\leftrightarrow v$ for 
$w\in\Sigma^n$.

\begin{proposition}\label{minrepext}
Let $w\in\Sigma^n$ be a \lr{}.
Then $\so w\in\Sigma^{n+1}$ is a \lr{} if and only if $f_{1w}(i)=f_w(i)$ holds for $i\in[n]$ and $f_{1w}(n+1)=f_w(n)+1$. 
\end{proposition}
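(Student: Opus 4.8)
The plan is to reduce everything to the identity ``least representative $\iff$ suffix normal''. By Lemma~\ref{leastsuffix} the suffix normal word of a class is its \lr{}, and since this suffix normal word is unique, a word is a \lr{} exactly when $f = s$. I would therefore restate the two sides of the claimed equivalence purely in terms of the maximum-ones and suffix-ones functions: the hypothesis that $w$ is a \lr{} becomes $f_w = s_w$, the condition that $\so w$ be a \lr{} becomes $f_{\so w} = s_{\so w}$, and the goal is to match these against the prescribed values of $f_{\so w}$.

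First I would import the behaviour of the suffix-ones function under prepending $\so$, which Remark~\ref{trivialextensionresults} supplies directly: $s_{\so w}(i) = s_w(i)$ for every $i\in[n]$, and $s_{\so w}(n+1) = s_w(n)+1$. Combining these with the suffix normality $f_w = s_w$ of $w$ gives, for $i\in[n]$, the chain $s_{\so w}(i) = s_w(i) = f_w(i)$, and for the top index $s_{\so w}(n+1) = s_w(n)+1 = f_w(n)+1$. In other words, the values that the statement prescribes for $f_{\so w}$ are precisely the values of $s_{\so w}$.

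With this preparation both directions become a coordinate-by-coordinate comparison. For the \emph{if} direction, assuming $f_{\so w}(i)=f_w(i)$ on $[n]$ and $f_{\so w}(n+1)=f_w(n)+1$, the identities above rewrite these right-hand sides as $s_{\so w}(i)$ and $s_{\so w}(n+1)$, so $f_{\so w}=s_{\so w}$ on all of $[n+1]$ and hence $\so w$ is suffix normal, i.e. a \lr{}. For the \emph{only if} direction, suffix normality of $\so w$ gives $f_{\so w}=s_{\so w}$ throughout, and reading this against the same identities yields the two asserted equalities on $[n]$ and at $n+1$.

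I do not expect a genuine obstacle, since the structural input (that a \lr{} is the same as a suffix normal word, and the update rule for $s$ under prepending $\so$) is already in hand; the remaining work is bookkeeping. The one point worth flagging is that the clause $f_{\so w}(n+1)=f_w(n)+1$ is in fact automatic: the unique length-$(n+1)$ factor of $\so w$ is $\so w$ itself, so $f_{\so w}(n+1)=|w|_{\so}+1=f_w(n)+1$ independently of normality; the genuinely discriminating condition is the agreement $f_{\so w}=f_w$ on $[n]$, and I would retain the redundant clause only to match the statement. The only care required is to invoke the identities of Remark~\ref{trivialextensionresults} on the correct index ranges and to treat the index $n+1$ separately from $[n]$.
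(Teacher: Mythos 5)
Your proposal is correct and follows essentially the same route as the paper's proof: both reduce ``\lr{}'' to suffix normality ($f=s$), invoke Remark~\ref{trivialextensionresults} for the behaviour of $s$ under prepending $\so$, and then compare $f_{\so w}$ with $s_{\so w}$ coordinate by coordinate in each direction. Your added observation that the clause $f_{\so w}(n+1)=f_w(n)+1$ is automatic is a correct (and mildly clarifying) remark not made in the paper, but it does not change the argument.
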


\begin{proof}
Let $w\in\Sigma^n$ be a least representative.
Consider firstly $\so w\in\Sigma^{n+1}$ to be least representatives as well. Since $w$ is a least representative we have $s_w(i)=f_w(i)$ for all $i\in[n]$. By Remark~\ref{trivialextensionresults} follows $s_w(i)=s_{\so w}(i)$ for all $i\in[n]$ and with $\so w$ being a least representative we get $f_w(i)=f_{\so w}(i)$ for all $i\in[n]$. By the same arguments we get $f_{\so w}(n+1)=s_{\so w}(n+1)=s_{w}(n)+1=f_w(n)+1$. Similarly we get for the second direction $f_{\so w}(i)=f_w(i)=s_w(i)=s_{\so w}(i)$ for all $i\in[n]$ and $f_{\so w}(n+1)=f_w(n)+1=s_w(n)+1=s_{\so w}(n+1)$.\qed
\end{proof}

\begin{corollary}\label{minrepextpal}
Let $w\in\NPal(n)$. Then $f_{w\so}(i)=f_w(i)$ for $i\in[n]$ and 
$f_{w\so}(n+1)=f_w(n)+1$. Moreover $s_{w\so}(i)=s_w(i)$ for $i\in[n]$ and 
$s_{w\so}(n+1)=s_w(n)+1$.
\end{corollary}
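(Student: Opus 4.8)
The plan is to read Corollary~\ref{minrepextpal} as the specialization of Proposition~\ref{minrepext} to prefix normal palindromes, transported from the \emph{prepend} picture (on which all of Section~4 is built) to the \emph{append} picture by reversal. First I would record that a \pnPal{} is automatically a \lr{}: by definition $w$ is prefix normal, and being a palindrome it satisfies $p_w=s_w$, hence $f_w=p_w=s_w$, so $w$ is simultaneously prefix and suffix normal. Proposition~\ref{corLol} then says $[w]_\equiv$ is a singleton, so $w$ is in particular a least representative — exactly the hypothesis needed to invoke Proposition~\ref{minrepext}.

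Next I would set up the reversal bookkeeping supplied by Remark~\ref{sufpref}. Since $w=w^R$, the reversal of the appended word is the prepended word, $(w\so)^R=\so w^R=\so w$, whence $f_{w\so}=f_{\so w}$ and $s_{w\so}=p_{(w\so)^R}=p_{\so w}$. Thus every assertion of the corollary about the append-form $w\so$ is equivalent to an assertion about the prepend-form $\so w$: the $f$-statement becomes a statement about $f_{\so w}$, and the $s$-statement becomes one about $p_{\so w}$. This equivalence is the bridge that lets me reuse the prepending machinery of Proposition~\ref{minrepext} and Remark~\ref{trivialextensionresults} without redeveloping an appending analogue.

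The core step is then to feed $w$ into Proposition~\ref{minrepext}. I would argue that the one-letter extension of the palindrome remains a least representative, so that the equivalent conditions of that proposition are available, yielding the desired values of $f_{\so w}$ and, via prefix normality of the extension ($p_{\so w}=f_{\so w}$), of $p_{\so w}=s_{w\so}$. Translating both back through the reversal identities of the previous paragraph gives the $f$- and $s$-parts simultaneously, and the two boundary values $f_{w\so}(n+1)=f_w(n)+1$ and $s_{w\so}(n+1)=s_w(n)+1$ then drop out directly from Remark~\ref{trivialextensionresults}, which already pins down what a single prepended $\so$ contributes at the top length.

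The main obstacle is precisely this core step: I would have to verify carefully that prepending (equivalently appending) a single $\so$ to the \pnPal{} $w$ does not destroy least-representativeness, i.e. that the factor newly created by the extra $\so$ never overtakes the maximum already recorded by $f_w$ at an intermediate length. Unlike a generic word, a \pnPal{} has rich internal structure, so this domination cannot be taken for granted; the only leverage available is suffix normality of $w$ (the $\so$s sit as far to the right as possible, cf.\ Lemma~\ref{pchar}) together with the standard fact that $f_w$ increases by at most one per unit of length. Checking how these two constraints interact for every intermediate index is the delicate part, and it is here that the behaviour genuinely depends on whether $w=\so^n$ or not; once the domination is settled on each index the remaining identities are routine, and the palindrome case of Lemma~\ref{lempalrec} can serve as an independent cross-check of the resulting functions.
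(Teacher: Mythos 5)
Your reduction to Proposition~\ref{minrepext} cannot be completed, because the fact you defer to the end --- that ``the one-letter extension of the palindrome remains a least representative'' --- is false, and with it the identity $f_{\so w}(i)=f_w(i)$ that you need to import. Lemma~\ref{notpal} states precisely that for $w\in\NPal(n)\setminus\{\so^n\}$ the word $\so w$ is \emph{not} a \lr{}: it is the reversal of the \lr{} $w\so$, hence the prefix normal form and the lexicographically \emph{largest} element of its class by Lemma~\ref{leastsuffix}. So Proposition~\ref{minrepext} delivers the negation of the condition you want. Your parenthetical ``prepending (equivalently appending)'' is the root of the trouble: $\so w$ and $w\so$ are reversals of one another and exactly one of them is a \lr{}. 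A concrete witness is $w=\so\sz\so\in\NPal(3)$: here $f_w=(1,1,2)$, but the appended $\so$ creates the new factor $\so\so$, so $f_{w\so}(2)=f_{\so w}(2)=2\neq f_w(2)$, and likewise $s_{w\so}(2)=2\neq 1=s_w(2)$. The ``domination'' you flag as the delicate point genuinely fails at every plateau of $f_w$, since $f_{w\so}(i)=\max\{f_w(i),\,s_w(i-1)+1\}=f_w(i-1)+1$.

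You should also know that this is not only a gap in your plan: the same example contradicts the literal statement of the corollary, and the paper's own proof rests on exactly the unjustified identification $f_w(i)=f_{\so w}(i)$ that you were hoping to establish, so your route is in substance the paper's route and both founder on the same step. What is true, and what the corollary is actually invoked for later (in Lemma~\ref{notpal} and Theorem~\ref{palcol}), is that $w\so$ is a \lr{}; the correct identities are $f_{w\so}(i)=s_{w\so}(i)=f_w(i-1)+1$ for all $i\in[n+1]$ (so only the boundary clause $f_{w\so}(n+1)=f_w(n)+1$ survives as stated). This follows in two lines from the observation that every length-$i$ factor of $w\so$ is either a factor of $w$ or the suffix $\Suff_{i-1}(w)\so$, together with $s_w=p_w=f_w$ for a \pnPal{} and $f_w(i)\leq f_w(i-1)+1$, giving $f_{w\so}(i)=f_w(i-1)+1=s_w(i-1)+1=s_{w\so}(i)$. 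Repairing your write-up along these lines yields a correct proof of the corrected statement; the detour through Proposition~\ref{minrepext} as you describe it is a dead end.
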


\begin{proof}
Since $w$ is a prefix normal palindrome, we have $(\so w)^R=w^R\so=w\so$. This implies 
$f_w(i)=f_{\so w}(i)=f_{(\so w)^R}(i)=f_{w\so}(i)$ for all $i\in[n]$ and 
$f_{w\so}(n+1)=f_w(n)+1$. If 
$s_{w\so}(i)=s_{w}(i)+1$ then 
$s_{w\so}(i)=s_{w}(i)+1=f_w(i)+1=f_{w\so}(i)+1$ would contradict 
$s_{w\so}(i)\leq 
f_{w\so}(i)$ for some $i\in[n+1]$. This proves the claim for the suffix-ones 
function.\qed
\end{proof}

This characterization is unfortunately not convenient for determining either 
the number of \lr s of length $n+1$ from the ones from length 
$n$ or the collapsing \lr s of length $n$. For a given word 
$w$, the maximum-ones function $f_w$ has to be determined, $f_w$ to be extended 
by $f_w(n)+1$, and finally the associated word - under the assumption $f_{\so 
w}\equiv s_{\so w}$ has to be checked for being suffix normal. For instance, 
given $w=\so\sz\sz\so\sz\so$ leads to $f_w=11223$, and is extended to $f_{\so 
w}=112234$. This would correspond to $\so\so\sz\so\sz\so$ which is not suffix 
normal and thus $w$ is not extendable to a new \lr{}. The 
following two lemmata reduce the amount of \lr s that needs to 
be checked for extensibility.

\begin{lemma}\label{smallsum}
Let $w\in\Sigma^n$ be a \lr{} such that $\so w$ is a \lr{} as well. Then for all \lr s 
$v\in\Sigma^n\backslash\{w\}$ collapsing with $w$, $f_v(i)\leq f_w(i)$ holds 
for all $i\in[n]$, i.e. all other \lr s have a 
smaller maximal-one sum.
\end{lemma}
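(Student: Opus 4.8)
The plan is to chain three facts already available in the excerpt, so that the inequality falls out with no case analysis. First I would translate the hypotheses into statements about maximum-ones functions. Since $w$ is a \lr{} and $\so w$ is a \lr{} as well, Proposition~\ref{minrepext} gives
\[
f_{\so w}(i)=f_w(i)\quad\text{for all }i\in[n]
\]
(and also $f_{\so w}(n+1)=f_w(n)+1$, which I will not need). The hypothesis that $v$ collapses with $w$ means exactly $\so w\pnEquiv{n+1}\so v$, i.e. $f_{\so w}=f_{\so v}$ on all of $[n+1]_0$; in particular $f_{\so v}(i)=f_{\so w}(i)=f_w(i)$ for every $i\in[n]$.

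Second I would invoke the single-prepended-$\so$ bound from Remark~\ref{trivialextensionresults}: prepending one $\so$ to $v$ can raise the maximum-ones value by at most one, so $f_{\so v}(i)\in\{f_v(i),f_v(i)+1\}$ and in particular $f_{\so v}(i)\ge f_v(i)$ for all $i\in[n]$. Combining this with the previous paragraph yields, for every $i\in[n]$,
\[
f_v(i)\le f_{\so v}(i)=f_{\so w}(i)=f_w(i),
\]
which is precisely the claimed inequality.

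Finally, for the remark about the maximal-one sum I would argue that the inequality is strict at some index. By Lemma~\ref{leastsuffix} each class contains a unique \lr{} (its suffix-normal element), so distinct \lr s have distinct maximum-ones functions; since $v\neq w$ this forces $f_v\neq f_w$. Together with $f_v(i)\le f_w(i)$ for all $i$, there must be an $i_0\in[n]$ with $f_v(i_0)<f_w(i_0)$, whence $\sigma(v)=\sum_{i\in[n]}f_v(i)<\sum_{i\in[n]}f_w(i)=\sigma(w)$. The argument is essentially a two-line substitution, so there is no genuine obstacle; the only points requiring care are not to misread the collapse hypothesis (it is $f_{\so w}=f_{\so v}$ on the \emph{extended} index set, not $f_w=f_v$) and to note that the bound $f_{\so v}\ge f_v$ is guaranteed only on $[n]$, which is exactly the range over which the conclusion is stated.
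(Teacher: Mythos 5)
Your proof is correct and takes essentially the same route as the paper's, namely the chain $f_v(i)\le f_{\so v}(i)=f_{\so w}(i)=f_w(i)$ followed by the observation that $v\neq w$ forces strict inequality somewhere. The paper merely spells out the outer links via the suffix-ones functions ($f_v=s_v=s_{\so v}\le f_{\so v}$ and $f_{\so w}=s_{\so w}=s_w=f_w$) where you invoke Remark~\ref{trivialextensionresults} and Proposition~\ref{minrepext}, which encode the same facts.
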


\begin{proof}
Let $v\in\Sigma^n\backslash\{w\}$ a least representative with $w\leftrightarrow_n v$. 
By the property of being least representatives, the definition of the maximum-ones and suffix-ones functions follows for all $i\in[n]$
\[
f_v(i)=s_v(i)=s_{\so v}(i)\leq f_{\so v}(i)=f_{\so w}(i)=s_{\so w}(i)=s_w(i)=f_w(i).
\]
By $v\neq w$ there exists at least one $j\in[n]$ with $s_w(j)>s_v(j)$. This implies 
\begin{align*}
\sigma(v)&=\sum_{i\in[n]}f_v(i)=\sum_{i\in[n]}s_v(i)\\
&<s_w(j)+\sum_{i\in[n]\backslash\{j\}} s_w(j)=\sum_{i\in[n]}s_w(j)=\sum_{i\in[n]}f_w(j)\\
&=\sigma(w).\qed
\end{align*}
\end{proof}

\begin{corollary}\label{lexsmall}
If $w,v\in\Sigma^n$ and $\so w\in\Sigma^{n+1}$ are \lr s with 
$w\leftrightarrow v$ and $v\neq w$ then $w\leq v$. 
\end{corollary}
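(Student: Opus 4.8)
The plan is to reduce the statement to a dominance claim between the prefix-ones functions and then read off the lexicographic order from the first position where $w$ and $v$ differ. Concretely, since $w$ is a \lr{} and $\so w$ is a \lr{} while $v \neq w$ is a \lr{} collapsing with $w$, the hypotheses of Lemma~\ref{smallsum} are met, so it yields $f_v(i) \leq f_w(i)$ for all $i \in [n]$. Because $w$ and $v$ are least representatives they are suffix normal, hence this reads $s_v(i) \leq s_w(i)$ for all $i \in [n]$.

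The key additional ingredient is that collapsing equalises the total number of $\so$s. Indeed, $w \leftrightarrow v$ means $\so w \pnEquiv{n+1} \so v$, hence $f_{\so w} = f_{\so v}$; evaluating at $n+1$ gives $\noo{w}+1 = f_{\so w}(n+1) = f_{\so v}(n+1) = \noo{v}+1$, so $\noo{w} = \noo{v}$, i.e.\ $p_w(n) = p_v(n)$. Now I would use the general identity $s_w(i) = p_w(n) - p_w(n-i)$ (and its analogue for $v$) recalled at the start of Section~\ref{intro}'s successor on least representatives. Substituting into $s_v(i) \leq s_w(i)$ and cancelling the equal totals $p_w(n) = p_v(n)$ turns the suffix inequality into the prefix dominance $p_w(n-i) \leq p_v(n-i)$ for all $i \in [n]$; together with the equality $p_w(n) = p_v(n)$ this gives $p_w(j) \leq p_v(j)$ for every $j \in [n]_0$.

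Finally I would translate dominance into lexicographic order. Let $k$ be the smallest index with $w[k] \neq v[k]$; such a $k$ exists since $w \neq v$. The common prefix gives $p_w(k-1) = p_v(k-1)$, and reading the letters as the increments $w[k] = p_w(k) - p_w(k-1)$ and $v[k] = p_v(k) - p_v(k-1)$, the inequality $p_w(k) \leq p_v(k)$ forces $w[k] \leq v[k]$, hence $w[k] = \sz$ and $v[k] = \so$. Therefore $w < v$, which in particular yields the desired $w \leq v$.

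I expect the main obstacle to be the second step: recognising that the collapse hypothesis, not merely the conclusion $f_v \leq f_w$ of Lemma~\ref{smallsum}, is what supplies $\noo{w} = \noo{v}$, since it is precisely the equality of totals that converts the pointwise suffix inequality into prefix dominance. Without it the suffix inequality does not control the lexicographic order, as one can exhibit suffix-normal words with $s_v \leq s_w$ yet $w > v$; the argument thus genuinely relies on $\so w$ being a \lr{}, both through Lemma~\ref{smallsum} and through the equal-totals computation.
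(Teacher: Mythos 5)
Your proof is correct and uses essentially the same ingredients as the paper's: the pointwise domination $s_v(i)\le s_w(i)$ (which you obtain by citing Lemma~\ref{smallsum} and which the paper re-derives inline at the first differing position via $s_v(n-i)=s_w(n-i)+1=f_{\so w}(n-i)+1=f_{\so v}(n-i)+1$), combined with the equality of the total number of $\so$s forced by $f_{\so w}=f_{\so v}$. Your repackaging of this as global prefix dominance $p_w(j)\le p_v(j)$ followed by reading off the first differing letter is a direct rendering of what the paper phrases as a contradiction, so the two arguments are essentially the same.
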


\begin{proof}
By $v\neq w$ exists an $i\in[n]$ minimal with $w[i]\neq v[i]$. Suppose $w[i]=1$ and $v[i]=0$.
By $f_{\so v}\equiv f_{\so w}$ we get $|w[i+1 .. n]|_{\so}+1=|v[i+1 .. n]|_{\so}$. Thus
\[
s_{\so v}(n-i)=s_{v}(n-i)=s_w(n-i)+1=f_w(n-i)+1=f_{\so w}(n-i)+1=f_{\so v}(n-i)+1.
\]
This contradicts $s_{\so v}(n-i)\leq f_{\so v}(n-i)$.\qed
\end{proof}

\begin{remark}\label{extendible}
By Corollary~\ref{lexsmall} the lexicographically smallest \lr{} 
$w$ among the collapsing leads to the \lr{} of $[\so w]$. Thus 
if $w$ is a \lr{} not collapsing with any lexicographically 
smaller 
word then $\so w$ is \lr{}.
\end{remark}
Before we present the theorem characterizing exactly the collapsing words for a 
given word $w$, we show a symmetry-property
of the \lr s which are not extendable to \lr s, i.e. a property of words which collapse.

\begin{lemma}\label{symminf}
Let  $w\in\Sigma^n$ be a \lr{}. Then $f_{1w}(i)\neq f_w(i)$ for 
some $i\in[n]$ iff
$f_{1w}(n-i+1)\neq f_w(n-i+1)$.
\end{lemma}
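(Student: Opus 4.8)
The plan is to show that, for a least representative $w$, the set $\{i\in[n] : f_{\so w}(i)\neq f_w(i)\}$ is exactly the set of indices satisfying a condition that is manifestly invariant under the substitution $i\mapsto n-i+1$; the claimed equivalence then drops out for each fixed $i$. First I would record that since $w$ is a \lr{} it is suffix normal, so $f_w=s_w$, and by Remark~\ref{trivialextensionresults} we have $f_{\so w}(i)\in\{f_w(i),f_w(i)+1\}$ for every $i\in[n]$. Hence the inequality $f_{\so w}(i)\neq f_w(i)$ is equivalent to the sharp increase $f_{\so w}(i)=f_w(i)+1$, and it suffices to characterise when this increase occurs.

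The key step is to express $f_{\so w}$ through the prefix-ones function. Prepending a single $\so$ creates only those factors of $\so w$ that begin at the new first position; the length-$i$ such factor is $\so\,w[1..i-1]$ and contains $p_w(i-1)+1$ occurrences of $\so$, while every other factor of length $i$ already occurs in $w$. Therefore $f_{\so w}(i)=\max\{f_w(i),\,p_w(i-1)+1\}$, so the increase $f_{\so w}(i)=f_w(i)+1$ holds precisely when $p_w(i-1)+1>f_w(i)$, i.e. $p_w(i-1)\geq f_w(i)$. Combining this with the monotonicity $f_w(i-1)\leq f_w(i)$ of the maximum-ones function and the bound $p_w(i-1)\leq f_w(i-1)$ from Remark~\ref{sufpref}, the condition sharpens to the equality $p_w(i-1)=f_w(i)$ (which in fact forces $p_w(i-1)=f_w(i-1)=f_w(i)$).

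Finally I would translate this condition into symmetric form using suffix normality. From $s_w(n)=p_w(i)+s_w(n-i)$ together with $f_w=s_w$ one obtains $p_w(j)=f_w(n)-f_w(n-j)$, hence $p_w(i-1)=f_w(n)-f_w(n-i+1)$. Substituting into $p_w(i-1)=f_w(i)$ turns the condition into $f_w(i)+f_w(n-i+1)=f_w(n)$, which is visibly invariant under $i\mapsto n-i+1$. Since $f_{\so w}(i)\neq f_w(i)$ is equivalent to this symmetric equation, and the very same equation (with $i$ replaced by $n-i+1$) characterises $f_{\so w}(n-i+1)\neq f_w(n-i+1)$, the asserted equivalence follows; the degenerate cases $w=\sz^n$ and $w=\so^n$ are consistent and need no separate treatment.

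The main obstacle is the identity $f_{\so w}(i)=\max\{f_w(i),\,p_w(i-1)+1\}$: one must argue carefully that no factor of $\so w$ of length $i$ other than the leading prefix $\so\,w[1..i-1]$ can beat $f_w(i)$, which rests on the observation that any factor avoiding the prepended letter is already a factor of $w$. Once this is established, the remainder is routine bookkeeping: inserting the suffix-normal relation and reading off the symmetry of $f_w(i)+f_w(n-i+1)=f_w(n)$.
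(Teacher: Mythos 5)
Your proof is correct and follows essentially the same route as the paper: both arguments rest on the observation that the only new length-$i$ factor of $\so w$ is the leading prefix $\so\,w[1..i-1]$, so a change in $f$ at position $i$ forces $p_w(i-1)=f_w(i)$, which is then transported to position $n-i+1$ via the complementary count $p_w(j)+s_w(n-j)=|w|_{\so}$ and suffix normality $f_w=s_w$. The only difference is presentational: you package the condition as the manifestly symmetric equation $f_w(i)+f_w(n-i+1)=f_w(n)$, whereas the paper proves one implication directly and obtains the converse by the substitution $j:=n-i+1$.
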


\begin{proof}
Since $w$ is least representative, we have 
\[
f_w(n-i+1)=s_w(n-i+1)=|\Suff_{n-i+1}(w)|_{\so}=|w|_{\so}-|\Pref_{i-1}(w)|_{\so}. 
\]
From $f_w(i)\neq f_{\so w}(i)$ follows $f_{\so w}(i)=f_w(i)+1=s_w(i)+1$. Thus $\so w$ has a factor of length $i$ with $s_w(i)+1$ $\so$s. The suffix normality of $w$ implies that this factor needs to be the prefix of $\so w$ of length $i$, i.e. $|\Pref_{i-1}(w)|_{\so}=s_w(i)$. Thus we get $f_w(n-i+1)=|w|_{\so}-s_w(i)$. On the other hand we have
\[
|\Pref_{n-i+1}(\so w)|_{\so}=|\Pref_{n-i}(w)|_{\so}+1=|w|_{\so}-|\Suff_i(w)|_{\so}+1=|w|_{\so}-s_w(i)+1.
\]
Consequently $f_{\so w}(n-i+1)\geq|w|_1-s_w(i)+1>f_w(n-i+1)$. The second 
direction follows immediately with $j:=n-i+1$ and $f_{\so w}(j)\neq f_w(j)$. 
\qed
\end{proof}

By \cite[Lemma 10]{journals/tcs/BurcsiFLRS17} a word $w\so$ is prefix normal if 
and only if $|\Suff_k(w)|_{\so}<|\Pref_{k+1}(w)|_{\so}$ for all $k \in \mathbb{N}$. The following 
theorem extends this result for determining the collapsing words $w'$ for a 
given word $w$.

\begin{theorem}\label{collapstheo}
Let $w\in\Sigma^n$ be a \lr{} and $w'\in\Sigma^n\backslash\{w\}$ 
with 
$|w|_1=|w'|_1=s\in\N$. Let moreover $v\not\leftrightarrow w$ for all 
$v\in\Sigma^{\ast}$ with $v\leq w$. Then $w\leftrightarrow w'$ iff
\begin{enumerate}
\item\label{p1} $f_{w'}(i)\in\{f_w(i),f_w(i)-1\}$ for all $i\in[n]$,
\item\label{p2} $f_{w'}(i)=f_w(i)$ implies $f_{1w'}(i)=f_{w}(i)$,
\item\label{p3} $f_{w'}(i)\geq\begin{cases}
f_{w'}(n)-f_{w'}(n-i+1) & \mbox{if }f_{w'}(n-i+1)=f_{w'}(n-i),\\
f_{w'}(n)-f_{w'}(n-i+1)+1 & \mbox{otherwise}.
\end{cases}$
\end{enumerate}
\end{theorem}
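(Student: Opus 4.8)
The plan is to reduce the collapse relation $w \leftrightarrow w'$ to a pointwise comparison of maximum-ones functions and then match each of the three conditions to one regime of that comparison. The hypothesis $v \not\leftrightarrow w$ for all $v \le w$ says that $w$ is the lexicographically smallest element of its collapse class, so Remark~\ref{extendible} yields that $\so w$ is again a least representative, and Proposition~\ref{minrepext} then gives $f_{\so w}(i)=f_w(i)$ for $i\in[n]$ together with $f_{\so w}(n+1)=f_w(n)+1=s+1$. Hence $w\leftrightarrow w'$, i.e.\ $f_{\so w}=f_{\so w'}$, is equivalent to $f_{\so w'}(i)=f_w(i)$ for all $i\in[n]$, the coordinate $n+1$ being forced by $|w'|_{\so}=s=|w|_{\so}$. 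I would also record the elementary observation that every length-$i$ factor of $\so w'$ is either an internal factor of $w'$ or the prefix $\so\,w'[1..i-1]$, so that $f_{\so w'}(i)=\max\{f_{w'}(i),\,1+p_{w'}(i-1)\}$; consequently prepending $\so$ strictly increases $f$ at position $i$ exactly when $p_{w'}(i-1)\ge f_{w'}(i)$, and since $p_{w'}(i-1)+s_{w'}(n-i+1)=s$ this is the same as $s_{w'}(n-i+1)\le s-f_{w'}(i)$.

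For the forward direction, assume $w\leftrightarrow w'$, so $f_{\so w'}(i)=f_w(i)$ for all $i\in[n]$. By Remark~\ref{trivialextensionresults} we have $f_{\so w'}(i)\in\{f_{w'}(i),f_{w'}(i)+1\}$, and feeding in $f_{\so w'}(i)=f_w(i)$ immediately forces $f_{w'}(i)\in\{f_w(i),f_w(i)-1\}$, which is condition~\ref{p1}. Condition~\ref{p2} is just the sub-case $f_{w'}(i)=f_w(i)$ of the same equality, where $f_{\so w'}(i)=f_w(i)$ holds by collapse. Finally, condition~\ref{p3} follows because $f_{w'}$ coincides with the suffix-ones function $s_u$ of the least representative $u$ of $[w']_\equiv$, and $s_u$ satisfies exactly the inequality of Lemma~\ref{pchar}; thus all three conditions are necessary.

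The converse is the substantial part. Assume conditions~\ref{p1}--\ref{p3} and set $A=\{i:f_{w'}(i)=f_w(i)\}$ and $B=\{i:f_{w'}(i)=f_w(i)-1\}$, which partition $[n]$ by condition~\ref{p1}. On $A$, condition~\ref{p2} gives $f_{\so w'}(i)=f_w(i)$ outright, so it remains to show that for every $i\in B$ prepending $\so$ increases $f$ at $i$, i.e.\ (by the reformulation above) $s_{w'}(n-i+1)\le s-f_w(i)+1$. My plan is to read both condition~\ref{p2} and this target through the suffix-ones function: rewriting condition~\ref{p2} as the lower bounds $s_{w'}(n-j+1)\ge s-f_w(j)+1$ for $j\in A$, and noting that the target for $i\in B$ is the reverse inequality for the very same quantity $\phi(m):=s-f_w(n-m+1)+1$ evaluated at $m=n-i+1$. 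Both $s_{w'}$ and $\phi$ are monotone with unit steps and agree at the right end ($s_{w'}(n)=s=\phi(n)$ since $f_w(1)=1$), so the lower bounds coming from the $A$-indices, together with the universal bound $s_{w'}\le f_{w'}$ and the least-representative balance of condition~\ref{p3} for $f_{w'}$, should pin $s_{w'}$ exactly to $\phi$ on the $B$-indices and thereby deliver the increase. Combining the two regimes yields $f_{\so w'}=f_{\so w}$ on all of $[n+1]$, that is $w\leftrightarrow w'$.

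I expect the crux to be this last step of the converse: condition~\ref{p1} determines $f_{w'}$ only up to the splitting into $A$ and $B$, whereas the occurrence of an increase is governed by $s_{w'}$, which carries strictly more information than $f_{w'}$ alone. The delicate point is therefore to show that the reversed lower bounds extracted from condition~\ref{p2}, the balance inequality~\ref{p3}, and the global constraint $|w'|_{\so}=s$ together force $s_{w'}(n-i+1)$ down to $s-f_w(i)+1$ for each $i\in B$. In particular this amounts to proving that $B$ is symmetric under $i\mapsto n-i+1$, mirroring the symmetry of the increase positions established in Lemma~\ref{symminf}; ruling out a ``mixed'' pair with $i\in B$ but $n-i+1\in A$ is where the argument must do its real work.
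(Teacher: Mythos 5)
Your forward direction is sound and coincides with the paper's: collapse together with Proposition~\ref{minrepext} gives $f_{\so w'}(i)=f_w(i)$ for $i\in[n]$, Remark~\ref{trivialextensionresults} then yields condition~\ref{p1}, condition~\ref{p2} is the degenerate sub-case of the same equality, and condition~\ref{p3} comes from Lemma~\ref{pchar} applied to the least representative of $[w']_{\equiv}$. Your normalisation $f_{\so w'}(i)=\max\{f_{w'}(i),\,1+p_{w'}(i-1)\}$, and the translation of ``the prepended $\so$ raises $f$ at $i$'' into $s_{w'}(n-i+1)\leq s-f_{w'}(i)$, are also correct and are a clean way to set up the converse.

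The gap is in the converse, and you have flagged it yourself without closing it. For $i\in B$, i.e. $f_{w'}(i)=f_w(i)-1$, you must prove $p_{w'}(i-1)\geq f_{w'}(i)$, so that $f_{\so w'}(i)$ is lifted back to $f_w(i)$; everything you offer for this is conditional (``should pin $s_{w'}$ exactly to $\phi$'', ``is where the argument must do its real work''), no inequality chain is produced, and condition~\ref{p3} --- the only hypothesis constraining $w'$ beyond the pointwise comparison with $f_w$, hence the one that must carry this step --- is never concretely invoked. Conditions~\ref{p1} and~\ref{p2} alone cannot suffice, since they impose nothing at the positions of $B$, so a proof that never cashes in condition~\ref{p3} cannot be completed. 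The paper closes exactly this step by contradiction: if the increase fails at some $i\in B$, it deduces $|\Pref_{i-1}(w')|_{\so}<|\Suff_i(w')|_{\so}\leq f_{w'}(i)$ and, using $|\Pref_{i-1}(w')|_{\so}=s-|\Suff_{n-i+1}(w')|_{\so}\geq s-f_{w'}(n-i+1)=f_{w'}(n)-f_{w'}(n-i+1)$, plays the resulting strict inequality off against condition~\ref{p3}. Your alternative route --- proving that $B$ is symmetric under $i\mapsto n-i+1$ (note that Lemma~\ref{symminf} gives symmetry of the \emph{increase} positions of $w'$, which by condition~\ref{p2} form a subset of $B$, not $B$ itself) and squeezing $s_{w'}$ between the lower bounds from the $A$-positions and the target upper bounds --- is plausible in outline, but that squeeze is precisely the unproved content. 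Until it is supplied, the converse, and hence the theorem, is not established.
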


\begin{proof}
Notice that $|w|_1=|w'|_1=s\in\N$ implies immediately $f_w(1)=f_{w'}(1)=1$ and 
$f_{w'}(n)=f_w(n)=s$. Moreover for all $i\in[n]$ by Lemma~\ref{f_w(k+1)} we 
have 
$f_{w'}(i)\in\{f_{w'}(i-1),f_{w'}(i-1)+1\}$ and 
$f_{w}(i)\in\{f_{w}(i-1),f_{w}(i-1)+1\}$ and by Lemma~\ref{symminf} we get 
$f_{w'}(i)\neq f_{1w'}(i)$ iff $f_{w'}(n-i+1)\neq f_{1w'}(n-i+1)$.

\medskip

Firstly consider the $\Leftarrow$-direction, i.e. let $w'\in\Sigma^n$ with 
$|w|_1=s$ and the properties \ref{p1}, \ref{p2}, and \ref{p3}. We have to prove 
$w'\leftrightarrow w$, 
hence we have to prove $f_{\sz w}(i)=f_{\sz w'}(i)$ for all $i\in[n]$. Since 
$w$ does not collapse with any lexicographically smaller $v\in\Sigma^n$, $\sz 
w$ is a least representative by Remark~\ref{extendible}. From
Proposition~\ref{minrepext} follows
$f_w(i)=f_{1w}(i)$ for all $i\in[n]$. Obviously we have $f_{\so
w}(1)=1=f_{\so w'}(1)$ and hence the claim holds for $i=1$. By 
$f_{w'}(n)=s$ we get $f_{1w'}(n)\in\{s,s+1\}$. If $f_{1w'}(n)$ were $s+1$ then 
by $f_{1w'}(n)\neq f_{w'}(n)$ and consequently by Lemma~\ref{symminf} we would 
have
$1=f_{1w'}(1)\neq f_{w'}(1)=1$. Hence the claim holds for $i=n$. Let 
$i\in[n-1]_{>1}$. The claim holds by property 1 and 
Proposition~\ref{minrepext} if $f_{w'}(i)=f_w(i)$. Hence, 
assume $f_{w'}(i)\neq f_w(i)$ for an $i\in[n-1]_{>1}$, i.e. 
$f_{w'}(i)=f_w(i)-1$ by property 1.  By Remark~\ref{trivialextensionresults} we 
have $f_{\so w'}(i)\in \{f_{w'}(i),f_{w'}(i)+1\}$.\\
{\bf case 1:} $f_{w'}(i)=f_{1w'}(i)$ \\
If $w'$'s prefix of length $i-1$ had more 
(or equal) $\so$s than the suffix of length $i$, then the prefix of $\so w'$ of 
length $i$ would have strictly more $\so$s than the suffix of length $i$. This 
contradicts $f_{w'}(i)={1w'}(i)$ and thus we have 
$|\Pref_{i-1}(w')|_{\so}<|\Suff_i(w')|_{\so}$. By 
$f_{w'}(n-i+1)\geq|\Suff_{n-i+1}(w')|_{\so}$ and 
$|\Suff_{n-i+1}(w')|_{\so}+\|\Pref_{i-1}(w')|_{\so}=s$ we get
\begin{align*}
s-f_{w'}(n-i+1) & \leq 
s-|\Suff_{n-i+1}(w')|_{\so}=|\Pref_{i-1}(w')|_1\\
&<|\Suff_i(w')|_{\so}\leq 
f_{w'}(i).
\end{align*}
This is a contradiction to property 3.\\
{\bf case 2:} $f_{w'}(i)+1=f_{1w'}(i)$\\
In this case we get immediately
\[
f_{\so w'}(i)=f_{w'}(i)+1=f_w(i)-1+1=f_w(i)=f_{\so w}(i).
\]
Thus 
$f_{1w'}(i)=f_w(i)$ for all $i\in[n]$ which means that $f_{1w'}$ and $f_{1w}$ 
are identical, i.e. $w$ and $w'$ collapse.

\medskip

For the $\Rightarrow$-direction, assume $w\leftrightarrow w'$, i.e.
$f_{1w'}= f_{1w}$. Proposition~\ref{non-least} implies that $w'$ can be 
assumed as a least representative since $\so w$ is a least representative and 
by $w$ and $w'$ collapsing, $\so w'$ is one as well. By 
Proposition~\ref{minrepext} we have 
$f_{1w}(i)=f_w(i)$ 
for all $i\in[n]$ and thus $f_{1w'}(i)=f_w(i)$ which proves (\ref{p2}). Since 
$f_w(i)=f_{1w'}(i)\in\{f_{w'}(i),f_{w'}(i)+1\}$ for all $i\in[n]$ we get 
property \ref{p1}. Since $w'$ is a least representative,  Lemma~\ref{pchar} 
implies property \ref{p3}.\qed
\end{proof}

Theorem~\ref{collapstheo} allows us to construct the equivalence classes w.r.t. 
the least representatives of the previous length but more tests than necessary 
have to be performed: Consider, for 
instance $w=\so\so\so\sz\so\so\sz\sz\so\so\so\sz\so\so\so\so\so$ which is a smallest \lr{} 
of length $17$ not collapsing with any lexicographically smaller \lr{}. For $w$ 
we have $f_w=1\cdot 2\cdot 3\cdot 4\cdot 5\cdot 
5\cdot 6\cdot 7\cdot 8\cdot 8\cdot 8\cdot 9\cdot 10\cdot 10\cdot 11\cdot 
12\cdot 13$ where the dots just act as separators between letters. 
Thus we know for any $w'$ 
collapsing with $w$, that $f_{w'}(1)=1$ and $f_{w'}(17)=13$. The constraints
$f_{w'}(2)\in\{f_{w'}(2),f_{w'}(2)+1\}$ and $f_{w'}(2)\leq f_{w}(2)$ implies 
$f_{w'}(2)\in\{1,2\}$. First the check that $f_{w'}(10)=4$ is impossible 
excludes $f_{w'}(2)=1$. Since no collapsing word can have a factor of length 
$2$ with only one $\so$, a band in which the possible values range can be 
defined by the unique greatest collapsing word $w'$. It is not surprising that 
this word is connected with the prefix normal form.
The following two lemmata define the band in which the possible collapsing 
words 
$f_w$ are.

\begin{lemma}\label{schlauch1}
Let $w\in\Sigma^n\backslash\{\sz^n\}$ be a \lr{} with 
$v\not\leftrightarrow w$ 
for all $v\in\Sigma^n$ with $v\leq w$. Set $u:=(\so w[1..n-1])^R$. Then 
$w\leftrightarrow u
$ and for all \lr s $v\in\Sigma^n\backslash\{u\}$ with 
$v\leftrightarrow w$ and all $i\in[n]$ $f_{v}(i)\geq f_{u}(i)$, thus 
$\sigma(u) = \sum_{i\in[n]}f_u(i) \leq \sum_{i\in[n]}f_v(i) = \sigma(v)$.
\end{lemma}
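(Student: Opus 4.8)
The plan is to first pin down $u$ structurally, which makes the collapse assertion immediate, and then to prove the extremality of $f_u$ index by index. Since $w$ is a \lr{} with $w\neq\sz^n$, Remark~\ref{rem01} gives $w[n]=\so$, so $w=w[1..n-1]\so$ and hence
\[
\so u=\so\,(w[1..n-1])^{R}\,\so=(\so\,w[1..n-1]\,\so)^{R}=(\so w)^{R}.
\]
Thus $\so u=(\so w)^{R}$, and by $f_x=f_{x^R}$ (Remark~\ref{sufpref}) we get $f_{\so u}=f_{(\so w)^{R}}=f_{\so w}$, i.e. $\so u\pnEquiv{n+1}\so w$, which is exactly $w\leftrightarrow u$. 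Moreover, since $w$ collapses with no lexicographically smaller word, Remark~\ref{extendible} makes $\so w$ a \lr{}, so Proposition~\ref{minrepext} yields $f_{\so w}(i)=f_w(i)$ for all $i\in[n]$; also $\so u=(\so w)^{R}$ is, as the reversal of the suffix normal $\so w$, prefix normal.

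For the extremality I first record the trivial bound $f_u\le f_w$: as $u^{R}=\so\,w[1..n-1]$ is a prefix of $\so w$, every factor of $u$ is a factor of $\so w$, whence $f_u=f_{u^{R}}\le f_{\so w}=f_w$. In particular the claim already holds for $v=w$. Now fix any \lr{} $v$ with $v\leftrightarrow w$; by definition $f_{\so v}=f_{\so w}$, so combining $f_{\so v}(i)=f_w(i)$ with Remark~\ref{trivialextensionresults} (prepending $\so$ raises $f$ by at most $1$) gives $f_v(i)\in\{f_w(i),f_w(i)-1\}$, and likewise $f_u(i)\in\{f_w(i),f_w(i)-1\}$. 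Hence $f_v(i)<f_u(i)$ can occur only when $f_v(i)=f_w(i)-1$ and $f_u(i)=f_w(i)$, and it is precisely this configuration that I must exclude. I would first dispose of the indices with $f_w(i+1)=f_w(i)+1$: there a length-$(i{+}1)$ factor of $\so v$ carrying $f_w(i)+1$ ones contains a length-$i$ factor that avoids position $1$ and still has $f_w(i)$ ones, which is a factor of $v$, forcing $f_v(i)=f_w(i)$; so no dip is possible at such indices.

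The remaining, and genuinely hard, case is a dip at a \emph{flat} index, $f_w(i+1)=f_w(i)$. Here I would use suffix normality of $v$ together with the collapse identity: $f_{\so v}(i)=f_w(i)>f_v(i)$ means the prefix of $\so v$ realises the maximum, i.e. $1+p_v(i-1)=f_w(i)$, and since $v$ is a \lr{} the inequality $f_v(i)=s_v(i)\ge p_v(i)=p_v(i-1)+v[i]$ forces $v[i]=\sz$ and $f_v(i)=f_w(i)-1$; by the symmetry of Lemma~\ref{symminf} the same dip then occurs at $n-i+1$. The goal is to deduce $f_u(i)=f_w(i)-1$ as well, i.e. that the prefix normal word $\so u=(\so w)^{R}$ loses its maximum at length $i$ once its first letter is removed. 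Equivalently, and this is the conceptual core, the prefix normal representative minimises the maximum-ones function of its length-$n$ tail over the whole class $[\so w]_{\equiv}$. Concretely I would translate $f_u(i)=f_w(i)$ into the existence of an $f_w(i)$-dense length-$i$ factor of $\so w$ that avoids its last position, and rule it out by combining $p_v\ge p_w$ (which is Lemma~\ref{smallsum} read through $s_v(i)=|v|_{\so}-p_v(n-i)$), the collapse equalities $1+p_v(j-1)=f_w(j)$ holding at the symmetric dip indices, and the fact that $f_w(i)=s_w(i)$ is attained by the suffix of $w$. The main obstacle is exactly this factor-counting step, namely showing that the dip set $\{i:f_v(i)<f_w(i)\}$ realised by any collapsing least representative is contained in that of $u$; the window-shift and prefix-sum bookkeeping needed to reconcile a non-suffix maximal factor of $\so w$ with the existence of a dipping $v$ is where all the work lies.

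Finally, once $f_u(i)\le f_v(i)$ is established for every $i\in[n]$, summing over $i$ gives $\sigma(u)=\sum_{i\in[n]}f_u(i)\le\sum_{i\in[n]}f_v(i)=\sigma(v)$, which completes the proof.
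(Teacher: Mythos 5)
Your first half is correct and coincides with the paper's argument: since $w$ is an odd \lr{} (Remark~\ref{rem01}), $\so u=(\so w)^R$, hence $f_{\so u}=f_{(\so w)^R}=f_{\so w}$ and $w\leftrightarrow u$. The problem is the second half. You correctly reduce the pointwise claim $f_v(i)\geq f_u(i)$ to excluding the configuration ``$f_u(i)=f_w(i)$ and $f_v(i)=f_w(i)-1$'', and you correctly dispose of the indices where $f_w$ increases; but at the flat indices you give no argument at all --- you describe what would have to be shown (that the dip set of any collapsing \lr{} is contained in the dip set of $u$) and then declare that this ``is where all the work lies.'' That containment \emph{is} the second assertion of the lemma, so the proposal is not a proof: without that step you only have $f_v(i)\geq f_u(i)-1$, which does not yield $\sigma(u)\leq\sigma(v)$.

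For comparison, the paper does not attempt your index-by-index analysis at all. It notes via Remark~\ref{extendible} that $\so w$ is a \lr{}, hence by Lemma~\ref{leastsuffix} and Remark~\ref{sufpref} its reversal $\so u=(\so w)^R$ is the prefix normal form of $[\so w]_{\equiv}$, i.e.\ the lexicographically largest element of that class; so any $v$ with $v\leftrightarrow w$ and $v>u$ would give $\so v>(\so w)^R$, a contradiction. This three-line argument establishes that $u$ is lexicographically maximal among the collapsing words; the paper then passes from this to the pointwise statement about $f$ without further justification (note in particular that $u$ need not itself be a \lr{}, as the paper points out right after the lemma, so lexicographic comparison with the \lr s $v$ does not immediately translate into an $f$-comparison). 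So your instinct that there is real content in the $f$-comparison is defensible, but as submitted your text stops exactly at the point where either your own factor-counting argument or the paper's lexicographic shortcut would have to be carried out, and neither is.
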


\begin{proof}
Set $u=(1w[1..n-1])^R$. Then by $w$ odd follows
\begin{align*}
f_{\so w} &= f_{(\so w)^R} = f_{w^R \so} = 
f_{w[n](w[1..n-1])^R\so } = f_{w[n](\so w[1..n-1])^R} = f_{\so 
(1w[1..n-1])^R}\\
&= f_{\so u},
\end{align*}
i.e. $w\leftrightarrow u$. Since $w$ does not collapse with any 
lexicographically smaller word, $\so w$ is a least representative by 
Remark~\ref{extendible}. By Remark~\ref{sufpref} $(\so w)^R\in[\so w]_{\equiv}$ and 
$w^R\so$ is lexicographically the largest element in the class. If there 
existed a $v\in[w]_{\leftrightarrow}$ with $v> w[n-1..1]\so$, then
\[
\so v> \so w[n-1..1]=w^R\so=(\so w)^R
\]
would hold which contradicts the maximality of $(\so w)^R$.\qed
\end{proof}

Notice that $w'=(\so w[1..n-1])^R$ is not necessarily a \lr{} in 
$\Sigma^n/\equiv_n$ witnessed by the word of the last example. For $w$ we get
$u=\so\so\so\sz\so\so\so\sz\sz\so\so\sz\so\so\so\so$ with $f_{u}(8)=f_w(8)$ and $f_{u}(10)=7\neq 8=f_w(10)$ 
violating the symmetry property given in Lemma~\ref{symminf}. The following 
lemma alters $w'$ into a \lr{} which represents still the lower 
limit of the band.

\begin{lemma}\label{schlauch2}
Let $w\in\Sigma^n$ be a \lr{} such that $\so w$ is also a \lr{}. Let $w'\in\Sigma^n$ with $w\leftrightarrow w'$, and $I$ the set 
of all $i\in[\lfloor\frac{n}{2}\rfloor]$ with 
\begin{align*}
(f_{w'}(i)=f_w(i) &\land f_{w'}(n-i+1)\neq f_w(n-i+1))\mbox{ or } \\
(f_{w'}(i)\neq f_w(i) &\land f_{w'}(n-i+1)= f_w(n-i+1))
\end{align*}
and $f_w(j)=f_{w'}(j)$ for all $j\in[n]\backslash I$. Then $\hat{w}$ defined 
such that $f_{\hat{w}}(j)=f_{w'}(j)$ for all $j\in[n]\backslash I$ and 
$f_{\hat{w}}(n-i+1)=f_{w'}(n-i+1)+1$ ($f_{\hat{w}}(i)=f_{\hat{w}}(i)+1$ 
resp.) 
for all $i\in I$ holds, collapses with $w$.
\end{lemma}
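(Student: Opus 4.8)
The plan is to deduce $\hat w\leftrightarrow w$ from the characterisation in Theorem~\ref{collapstheo}, applied with $\hat w$ in the role of $w'$. The hypotheses of that theorem are at hand: by assumption $w$ is a \lr{} that does not collapse with any lexicographically smaller word, so $\so w$ is a \lr{} by Remark~\ref{extendible} and Proposition~\ref{minrepext}, and I will first check that $|\hat w|_1=|w|_1=s$ so the counting hypothesis holds. If the construction happens to return $\hat w=w$ the claim is trivial, so I may assume $\hat w\neq w$.

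First I would make the arithmetic of the construction explicit. Put $\delta(i)=f_w(i)-f_{w'}(i)$. Since $\so w$ is a \lr{}, Proposition~\ref{minrepext} gives $f_{\so w}(i)=f_w(i)$ for $i\in[n]$, and $w\leftrightarrow w'$ yields $f_{\so w'}(i)=f_{\so w}(i)=f_w(i)$; combined with $f_{\so w'}(i)\in\{f_{w'}(i),f_{w'}(i)+1\}$ from Remark~\ref{trivialextensionresults}, this shows $\delta(i)\in\{0,1\}$ and identifies $\delta(i)$ with the indicator that prepending $\so$ to $w'$ raises $f$ at position $i$. In these terms $I$ is exactly the set of indices $i\le\lfloor\frac{n}{2}\rfloor$ at which this bump pattern is asymmetric ($\delta(i)\neq\delta(n-i+1)$), and $\hat w$ is obtained by raising, in each such pair, the single coordinate on which $f_{w'}$ lies strictly below $f_w$ up to $f_w$, leaving $f_{\hat w}=f_{w'}$ elsewhere. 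I would note that the pair $\{1,n\}$ is symmetric, because $f_w(1)=f_{w'}(1)=1$ and $f_w(n)=f_{w'}(n)=s$, so position $n$ is never raised; hence $f_{\hat w}(n)=s=f_w(n)$, giving $|\hat w|_1=|w|_1$. A short monotonicity check then confirms that $f_{\hat w}$ is non-decreasing with unit steps, i.e.\ the maximum-ones function of an actual word.

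Next I would verify the three conditions of Theorem~\ref{collapstheo} for $\hat w$. Property~\ref{p1} is immediate, since by construction $f_{\hat w}(i)\in\{f_w(i),f_w(i)-1\}$. The decisive condition is Property~\ref{p3}, the suffix-normality band inequality of Lemma~\ref{pchar}; establishing it amounts to showing that $\hat w$ is a \lr{}. The construction is designed so that the bump pattern becomes symmetric, $\delta_{\hat w}(i)=\delta_{\hat w}(n-i+1)$, which is precisely the symmetry that Lemma~\ref{symminf} forces on every \lr; the raising step removes each asymmetric pair while leaving the already-symmetric pairs untouched. Once Property~\ref{p3} is in hand, Property~\ref{p2} follows as in Proposition~\ref{minrepext}: suffix normality gives $|\Pref_{i-1}(\hat w)|_{\so}\le s_{\hat w}(i-1)=f_{\hat w}(i-1)$, so prepending $\so$ cannot raise $f$ at an index where $f_{\hat w}$ already equals $f_w$. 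Theorem~\ref{collapstheo} then yields $\hat w\leftrightarrow w$.

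The hard part will be Property~\ref{p3}, that is, confirming that $\hat w$ is genuinely a \lr{}. Symmetrising the bump pattern is the right global idea, but since Lemma~\ref{symminf} only states symmetry as a \emph{necessary} condition for a \lr, I cannot invoke it to conclude suffix normality; instead I must verify the band inequality of Lemma~\ref{pchar} for $\hat w$ directly. I expect this to need a case analysis comparing $f_{\hat w}(i)$ with $f_{\hat w}(n)-f_{\hat w}(n-i+1)$ and its $\pm 1$ correction term, controlling in particular whether $f_{\hat w}(n-i+1)=f_{\hat w}(n-i)$, at the indices neighbouring $I$ and its mirror $\{n-i+1:i\in I\}$ where the raising step could a priori create a new violation. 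The inequalities for $f_w$ itself, valid because $\so w$ is a \lr, together with the restored pairwise symmetry of $\delta_{\hat w}$, should feed this analysis, but keeping track of the correction term across the modified indices is the genuinely delicate step.
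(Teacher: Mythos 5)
Your route through Theorem~\ref{collapstheo} is not the paper's, and as you yourself concede it leaves the decisive step unproven. On your approach, verifying Property~\ref{p3} for $\hat w$ amounts to showing that $\hat w$ satisfies the band inequality of Lemma~\ref{pchar}, i.e.\ essentially that $\hat w$ is a \lr{}; you announce that this ``should'' follow from a case analysis at the indices neighbouring $I$ and its mirror, but you never carry it out. On your route that verification \emph{is} the entire content of the lemma, so the proposal as written is a plan rather than a proof. (There is a further unfinished obligation you gesture at: that the modified function $f_{\hat w}$ is realised by an actual word at all --- being non-decreasing with unit steps is necessary but not sufficient for a function to be a maximum-ones function.)

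The paper's proof sidesteps all of this. The key observation, which your $\delta$-bookkeeping circles around but never lands on, is that the construction forces $f_{\hat w}(k)=f_w(k)$ for \emph{every} $k\in[n]$: on each mirror pair $(i,n-i+1)$ either both coordinates of $f_{w'}$ already agree with $f_w$, or exactly one falls short by $1$ (that is precisely what membership of $i$ in $I$ records, using $f_{w'}(k)\in\{f_w(k),f_w(k)-1\}$), and that deficient coordinate is the one raised. Once $f_{\hat w}=f_w$ is in hand, no appeal to Theorem~\ref{collapstheo} and no least-representative property of $\hat w$ is needed: for each $k$ one sandwiches $f_{\so\hat w}(k)\geq f_{\hat w}(k)=f_w(k)=f_{\so w}(k)$ (the last equality by Proposition~\ref{minrepext}) against the reverse inequality $f_{\so w}(k)\geq f_{\so\hat w}(k)$, which the paper draws from $\so w$ being a \lr{}, and concludes $f_{\so\hat w}=f_{\so w}$, i.e.\ $\hat w\leftrightarrow w$. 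If you want to keep your architecture you must actually prove suffix normality of $\hat w$, which is strictly more than the lemma asserts; I recommend instead first establishing $f_{\hat w}=f_w$ and then arguing directly on $f_{\so\hat w}$ as above.
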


\begin{proof}
Let $k\in[n]$. Since $\so w$ is least representative, we have 
$f_{\so w}(k)\geq f_{\so \hat{w}}(k)$. If $k\not\in I$ we get
\[
f_{\so w}(k)=f_w(k)=f_{w'}(k)=f_{\hat{w}}(k)\leq f_{\so\hat{w}}(k)
\]
and thus $f_{\so w}(k)=f_{\so \hat{w}}(k)$. If $k\in I$ we get in the first case
\[
f_{\so w}(n-k+1)=f_w(n-k+1)=f_{w'}(n-k+1)+1=f_{\hat{w}}(n-k+1)\leq 
f_{\so\hat{w}}(n-k+1)
\]
and thus $f_{\so w}(n-k+1)=f_{\so\hat{w}}(n-k+1)$. This second case holds 
analogously.\qed
\end{proof}

\begin{remark}
Lemma~\ref{schlauch2} applied to $(\so w[1..n-1])^R$ gives the lower limit of the 
band. Let $\hat{w}$ denote the output of this application for a given 
$w\in\Sigma^n$ according to Lemma~\ref{schlauch2}.
\end{remark}
Continuing with the example, we firstly determine $\hat{w}$ for 
$w=\so\so\so\so\sz\so\so\so\sz\sz\so\so\sz\so\so\so\so$. We get with $u=w[n-1..1]1$
Since for all collapsing $w'\in\Sigma^n$ we have $f_{\hat{w}}(i)\leq 
f_{w'}(i)\leq f_w(i)$, $w'$ 
is determined for $i\in[17]\backslash\{5,9,13\}$.
Since the value for $5$ de-
	
\begin{center}
	\begin{tabular}{@{}lccccccccccccccccc@{}}
		
		\scriptsize\color{black!70} i        & \scriptsize\color{black!70} 1 
&\scriptsize\color{black!70} 2 &\scriptsize\color{black!70} 3 
&\scriptsize\color{black!70} 4 &\scriptsize \color{black!70}5 
&\scriptsize\color{black!70} 6 & \scriptsize\color{black!70} 7 
&\scriptsize\color{black!70} 8 &\scriptsize\color{black!70} 9 & 
\scriptsize\color{black!70}10 &\scriptsize\color{black!70} 11 & 
\scriptsize\color{black!70}12 & \scriptsize\color{black!70}13 & 
\scriptsize\color{black!70}14 & \scriptsize\color{black!70}15 & 
		\scriptsize\color{black!70}16 & \scriptsize\color{black!70}17 
\\\cmidrule[0.8pt](lr){2-18} 
		$f_w$    & 1 & 2 & 3 & 4 & 5 & 5 & 6 & 7 & 8 & 8 & 8 & 9 & 10 & 10 & 11 
& 12 & 
		13  \\ \cmidrule(lr){2-18} 
		$f_{u}$  & 1 & 2 & 3 & 4 & 4 & 5 & 6 & 7 & 7 & 7 & 8 & 9 & 9 & 10 & 11 & 
12 & 
		13 \\ \cmidrule(lr){2-18} 
		$f_{\hat{w}}$  & 1 & 2 & 3 & 4 & 4 & 5 & 6 & 7 & 7 & 8 & 8 & 9 & 9 & 10 
& 11 & 
		12 & 13 \\
		\cmidrule[0.8pt](lr){2-18} 
	\end{tabular}
	\makeatletter\def\@captype{table}\makeatother
	\caption{$f$ for $w=\so\so\so\so\sz\so\so\so\sz\sz\so\so\sz\so\so\so\so$.}
	\label{tab:examplew}
	\end{center}
\noindent
termines the one for $13$ there are only two 
possibilities, namely $f_{w'}(5)=5$ and $f_{w'}(9)=7$ and $f_{w'}(5)=4$ and 
$f_{w'}(9)=8$. Notice that the words $w'$ corresponding to the generated 
words $f_{w'}$ are not necessarily \lr s of the shorter 
length as witnessed by the one with $f_{w'}(5)=5$ and $f_{w'}(9)=7$. In this 
example this leads to at most three words being not only in the class 
but also in the list of former representatives. Thus we are able to produce an 
upper bound for the cardinality of the class.
Notice that in any case we only have to test the 
first half of $w'$'s positions by Lemma~\ref{symminf}. 
This leads to the following definition.

\begin{definition}
Let $h_d:\Sigma^{\ast}\times\Sigma^{\ast}\rightarrow\N_0$ be the 
Hamming-distance. The {\em palindromic distance} 
$p_d:\Sigma^{\ast}\rightarrow\N_0$ is defined by 
$p_d(w)=h_d(w[1..\lfloor\frac{n}{2}\rfloor],(w[\lceil\frac{n}{2}\rceil+1..|w|]
)^R )$. Define the {\em palindromic prefix length} 
$p_{\ell}:\Sigma^{\ast}\rightarrow\N_0$ by 
$p_{\ell}(w)=\max\left\{\,k\in[|w|]\,|\,\exists 
u\in\Pref_k(w):\,p_d(u)=0\,\right\}$.
\end{definition}
The palindromic distance gives the minimal number of positions in which a bit 
has to be 
flipped for obtaining a palindrome. Thus, $p_d(w)=0$ for all palindromes $w$, 
and, for instance, $p_d(\so\so\sz\sz\so\so\sz\sz\so)=2$ since the first half of 
$w$ and the reverse of the second half mismatch in two positions.
The palindromic prefix length determines the length of $w$'s longest prefix 
being a palindrome. For instance $p_{\ell}(\so\so\sz\so)=2$ and 
$p_{\ell}(\sz\so\so\sz\so)=4$.
Since  a \lr{} $w$ determines the upper limit of the band and 
$w[n-1..1]\so$ the lower limit, the palindromic distance of $ww[n-1..1]\so$ is 
in relation to the positions of $f_w$ in which collapsing words may differ from 
$w$.

\begin{theorem}\label{collapsindex}
If $w\in\Sigma^n$ and $\so w$ are both \lr s then  $|[w]_{\leftrightarrow}|
\leq 
2^{\lceil\frac{p_d(ww[n-1..1]\so}{2}\rceil}$.
\end{theorem}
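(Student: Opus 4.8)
The plan is to count the least representatives collapsing with $w$ by counting the admissible maximum-ones functions they can carry, and then to halve this count using the symmetry of Lemma~\ref{symminf}. I first fix the \emph{band}. Since $\so w$ is a \lr{}, Lemma~\ref{smallsum} gives $f_{w'}(i)\leq f_w(i)$ for every collapsing \lr{} $w'$, so $f_w$ is the upper limit; Lemma~\ref{schlauch1} together with Lemma~\ref{schlauch2} produces $\hat w$ with $f_{\hat w}(i)\leq f_{w'}(i)$, the lower limit. Thus every collapsing \lr{} $w'$ satisfies $f_{\hat w}(i)\leq f_{w'}(i)\leq f_w(i)$, and by Theorem~\ref{collapstheo}(\ref{p1}) one has $f_{w'}(i)\in\{f_w(i),f_w(i)-1\}$, so in particular $f_{\hat w}(i)\in\{f_w(i),f_w(i)-1\}$ and the band has width at most $1$ everywhere. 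Let $O=\{i\in[n]\mid f_{\hat w}(i)=f_w(i)-1\}$ be the set of \emph{open} positions. Since a \lr{} is uniquely determined by its maximum-ones function, each collapsing \lr{} $w'$ is determined by the set $S(w')=\{i\mid f_{w'}(i)=f_w(i)-1\}\subseteq O$, where it makes an independent binary choice on $O$ and is forced elsewhere.

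Next I invoke the symmetry. Because $\so w$ and $\so w'$ are both \lr s and $w\leftrightarrow w'$, we have $f_{1w'}=f_{1w}$ on $[n]$, and Proposition~\ref{minrepext} gives $f_{1w}=f_w$ on $[n]$; hence $f_{w'}(i)\neq f_w(i)$ is equivalent to $f_{1w'}(i)\neq f_{w'}(i)$, and Lemma~\ref{symminf} applied to $w'$ shows this happens at $i$ if and only if it happens at $n-i+1$. Consequently every $S(w')$ is invariant under $i\mapsto n-i+1$, so the choice at $n-i+1$ is forced by the one at $i$, and $w'$ is completely determined by $S(w')\cap[\lceil n/2\rceil]$. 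As the open set $O$ is itself symmetric (this is exactly what the correction in Lemma~\ref{schlauch2} guarantees), this yields
\[
|[w]_{\leftrightarrow}|\leq 2^{|O\cap[\lceil n/2\rceil]|}=2^{\lceil |O|/2\rceil}.
\]

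Finally I relate $|O|$ to the palindromic distance. Writing $u=w[n-1..1]\so$, so that $u^R=\so w[1..n-1]$ and $p_d(ww[n-1..1]\so)=h_d(w,u^R)$, the claim is that every open position is a mismatch position, i.e. $O\subseteq\{i\mid w[i]\neq(\so w[1..n-1])[i]\}$. Indeed an open position forces $f_u(i)=f_w(i)-1$ (recall $f_u=f_{u^R}$ and $f_{\hat w}\geq f_u$), and since $u^R$ arises from $w$ by prepending a $\so$ and deleting the last letter, a drop of the maximum-ones value at $i$ can be traced back to a bit flip between $w$ and its shift $\so w[1..n-1]$ at that index. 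This gives $|O|\leq h_d(w,u^R)=p_d(ww[n-1..1]\so)$, and combining with the display above,
\[
|[w]_{\leftrightarrow}|\leq 2^{\lceil |O|/2\rceil}\leq 2^{\lceil p_d(ww[n-1..1]\so)/2\rceil}.
\]

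The main obstacle is this last step: turning a local decrement of the maximum-ones function inside the band into a genuine positional mismatch between $w$ and $\so w[1..n-1]$, injectively. I would attack it through the characterisation of Theorem~\ref{collapstheo} together with the suffix-normality bound of Lemma~\ref{pchar}, tracking how the prefix/suffix balance that forces $f_{\hat w}(i)<f_w(i)$ propagates to a transition of $w$; the symmetry established in the second paragraph then guarantees these mismatches are paired, which is precisely what legitimises replacing $|O|$ by $\lceil p_d/2\rceil$ in the exponent.
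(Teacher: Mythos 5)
Your band construction and the use of Lemma~\ref{symminf} to force $S(w')$ to be symmetric are sound and close in spirit to the paper's argument, but the last step --- the inequality $|O|\leq p_d(ww[n-1..1]\so)$ --- is a genuine gap, and not one that can be patched in place. The set $\{i\mid f_u(i)\neq f_w(i)\}$ for the lower limit $u$ is a \emph{union of intervals}: as the paper's proof observes, the mismatch positions $s_1<\dots<s_\ell$ alternately open and close the discrepancy, so the open positions fill the whole blocks $[s_i,s_{i+1}-1]$ for odd $i$. Hence $|O|$ is a sum of block \emph{lengths}, whereas $p_d(\cdot)=\ell$ only counts block \emph{endpoints}; a single pair of far-apart mismatches already produces one long block with $|O|$ much larger than $\ell$. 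Your heuristic that ``a drop of the maximum-ones value at $i$ can be traced back to a bit flip at that index'' is exactly what fails: the drop persists over the entire interval between two flips, not just at the flip positions. So $2^{\lceil|O|/2\rceil}$ need not be bounded by $2^{\lceil p_d(\cdot)/2\rceil}$, and the chain of inequalities breaks at its final link --- the very step you flagged as the main obstacle.

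The paper's proof instead counts per block: consecutive mismatches delimit at most $\lceil p_d/2\rceil$ maximal intervals on which the band is open, and it asserts that each such interval carries a single independent binary choice (follow $f_w$ or follow the lower limit throughout the interval), which yields the bound directly, with no halving by palindromic symmetry. To salvage your route you would have to replace the per-position independence on $O$ by the statement that an admissible $f_{w'}$ cannot switch between the two limits inside a block --- i.e.\ that $S(w')$ is a union of whole blocks --- using the unit-step property of $f_{w'}$, property~3 of Theorem~\ref{collapstheo} and Lemma~\ref{pchar}; the count then becomes $2$ to the number of blocks, which is at most $2^{\lceil p_d/2\rceil}$, and your symmetry observation via Lemma~\ref{symminf} turns into a (potentially sharper) refinement rather than the load-bearing step.
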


\begin{proof}
By Lemma~\ref{schlauch1} $w[n-1..1]\so$ determines the lower bound of the band 
for collapsing words. Let $s_1,\dots,s_{\ell}\in[n]$ with $s_i<s_{i+1}$ for 
$i\in[\ell], \ell\in[n]$ be the positions with $w[s_i]\neq 
(w[n-1..1]\so)[s_i]$. Thus for all odd $i\in[\ell-1]$, $f_w$ and 
$f_{w[n-1..1]\so}$ are different between $s_i$ and $s_{i+1}-1$, since a 
different bit leads to a different number of $\so$s. By the same argument, 
$f_w$ and $f_{w[n-1..1]\so}$ are identical between $s_{i}$ and $s_{i+1}-1$ for 
all even $i\in[\ell-1]$. This implies that only the differences in odd 
positions lead to differences values of the corresponding maximum-ones 
function. Since each difference in the maximum-ones functions can be altered 
independently for obtaining a potential collapsing word, the number of 
collapsing words is exponential in half the palindromic distance. \qed
\end{proof}

For an algorithmic approach to determine the \lr s of length 
$n$, we want to point out that the search for collapsing words can also be reduced using the palindromic prefix length. Let $w_1,\dots, w_m$ be 
the \lr s of length $n-1$. For each $w$ we keep track of 
$|w|-p_{\ell}(w)$. 
For each $w_i$ we check firstly if $|w_i|-p_{\ell}(w_i)=1$ since in this case 
the prepended $\so$ leads to a palindrome. Only if this is not the case, 
$[w_i]_{\leftrightarrow}$ needs to be determined. All collapsing words computed 
within the band of $w_i$ and $\hat{w_i}$ are deleted in $\{w_{i+1},\dots,w_m\}$.

\medskip

In the remaining part of the section we investigate the set $\NPal(n)$ w.r.t. 
$\NPal(\ell)$ for $\ell<n$. This leads to a second calculation for an upper 
bound and a refinement for determining the \lr s of 
$\Sigma^n/\equiv_n$ faster.

\begin{lemma}\label{notpal}
If $w\in\NPal(n)\backslash\{\so^n\}$ then $\so w$ is not a \lr{} 
but $w\so$ is a \lr{}.
\end{lemma}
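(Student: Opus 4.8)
The plan is to prove the two assertions separately, in each case reducing a statement about $\so w$ or $w\so$ to the shape of $f_w$. First note that $w$ is itself a \lr{}: being prefix normal it satisfies $f_w=p_w$, and being a palindrome it satisfies $p_w=s_w$, so $f_w=s_w$ and $w$ is suffix normal, hence a \lr{} by Lemma~\ref{leastsuffix}. The engine of both proofs is a single counting identity for the maximum-ones function under adding one $\so$ at an end. A length-$i$ factor ($i\in[n]$) of $\so w$ either lies inside $w$, contributing at most $f_w(i)$ ones, or begins with the new letter and equals $\so\,\Pref_{i-1}(w)$, contributing $p_w(i-1)+1$ ones; symmetrically for $w\so$ with the suffix $\Suff_{i-1}(w)\,\so$ contributing $s_w(i-1)+1$ ones. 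Using $p_w=s_w=f_w$ and the general bound $f_w(i)\le f_w(i-1)+1$, both maxima are attained by the new factor, giving $f_{\so w}(i)=f_{w\so}(i)=f_w(i-1)+1$ for all $i\in[n]$ (with the convention $f_w(0)=0$), while $f_{\so w}(n+1)=f_{w\so}(n+1)=|w|_{\so}+1=f_w(n)+1$.

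For the claim that $w\so$ is a \lr{} I would verify suffix normality directly. Every suffix of $w\so$ of length $i\ge 1$ ends in the appended $\so$, so $s_{w\so}(i)=s_w(i-1)+1=f_w(i-1)+1$ for $i\in[n]$ and $s_{w\so}(n+1)=s_w(n)+1=f_w(n)+1$. Comparing with the identity above yields $f_{w\so}=s_{w\so}$ on all of $[n+1]_0$, so $w\so$ is suffix normal and hence a \lr{} by Lemma~\ref{leastsuffix}.

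For the claim that $\so w$ is not a \lr{} I would apply Proposition~\ref{minrepext}, which (since $w$ is a \lr{}) states that $\so w$ is a \lr{} iff $f_{\so w}(i)=f_w(i)$ for all $i\in[n]$ together with $f_{\so w}(n+1)=f_w(n)+1$. By the identity, $f_{\so w}(i)=f_w(i)$ holds precisely when $f_w(i)=f_w(i-1)+1$, i.e. when $f_w$ strictly increases at $i$. Because $w\neq\so^n$ the word $w$ contains a $\sz$, so $f_w(n)=|w|_{\so}<n$; since $f_w(0)=0$ and $f_w(i)-f_w(i-1)\in\{0,1\}$, there is necessarily an index $i\in[n]$ where $f_w$ is flat, and there $f_{\so w}(i)=f_w(i-1)+1=f_w(i)+1\neq f_w(i)$. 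This breaks the criterion of Proposition~\ref{minrepext}, so $\so w$ is not a \lr{}.

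The two factor-counting identities are routine; the only step requiring attention is the existence of a flat index of $f_w$, which is exactly where the hypothesis $w\neq\so^n$ is used (for $w=\so^n$ the function $f_w$ rises at every step and, consistently, $\so w=\so^{n+1}$ is a \lr{}, while $w=\sz^n$ is handled automatically since then $f_w\equiv 0$ is flat throughout). I expect the main obstacle to be nothing more than phrasing the prepend/append identity $f_{\so w}(i)=f_{w\so}(i)=f_w(i-1)+1$ cleanly, after which both halves of the lemma drop out.
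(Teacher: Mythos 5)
Your proof is correct, but it takes a genuinely different route from the paper's. The paper disposes of the lemma in three lines: it invokes Corollary~\ref{minrepextpal} to conclude that $w\so$ is a \lr{}, and then obtains the statement about $\so w$ essentially for free by a reversal argument --- $(w\so)^R=\so w$ is the prefix normal form (lexicographically largest element) of the class of $w\so$ by Remark~\ref{sufpref} and Lemma~\ref{leastsuffix}, and since $w\neq\so^n$ forces $w\so\neq\so w$, the class is not a singleton, so its largest element cannot be its least representative. You instead work from first principles: you derive the append/prepend identity $f_{\so w}(i)=f_{w\so}(i)=f_w(i-1)+1$ from the factor decomposition together with $p_w=s_w=f_w$, verify $s_{w\so}=f_{w\so}$ directly, and refute the criterion of Proposition~\ref{minrepext} for $\so w$ by exhibiting a flat step of $f_w$, which exists exactly because $w\neq\so^n$. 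Your route is longer but more self-contained, and it is arguably on firmer ground: the identity you prove disagrees with the literal statement of Corollary~\ref{minrepextpal} (which asserts $f_{w\so}(i)=f_w(i)$ for $i\in[n]$), and your version is the correct one --- for $w=\so\sz\so\in\NPal(3)$ one has $f_w(2)=1$ but $f_{w\so}(2)=2$ because of the factor $\so\so$. So where the paper leans on a corollary whose stated conclusion (and whose key step $f_w(i)=f_{\so w}(i)$) does not hold as written, your direct computation establishes exactly what is needed. The only items to write out in a final version are the two routine observations you flag yourself: every length-$i$ factor of $\so w$ is either a factor of $w$ or equals $\so\Pref_{i-1}(w)$, and $f_w(i)-f_w(i-1)\in\{0,1\}$.
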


\begin{proof}
It suffices to prove that $w\so$ is a least representative. Then $(w\so)^R=\so 
w$ is prefix normal and since $w\so$ is not a palindrome, $\so w$ is not a 
least representative. By Corollary~\ref{minrepextpal} follows immediately that 
$w\so$ is a least representative.\qed
\end{proof}

\begin{remark}\label{palupperbound}
By Lemma~\ref{notpal} follows that all words $w\in\NPal(n)$ collapse with a 
smaller \lr{}. Thus, for all $n\in\N$, an upper bound for 
$|\Sigma^{n+1}/\equiv_{n+1}|$ is given by 
$2|\Sigma^n/\equiv_n|-\npal(n)$.
\end{remark}
For a closed recursive calculation of the upper 
bound in Remark~\ref{palupperbound}, the exact number $\npal(n)$ is 
needed. Unfortunately we are not able to determine $\npal(n)$ for arbitrary 
$n\in\N$. The following results show relations between prefix normal 
palindromes of different lengths. For instance, if $w\in\NPal(n)$ then $1w1$ is 
a prefix normal palindrome as well.
The importance of the the \pnPal s is witnessed by the 
following estimation.

\begin{theorem}\label{palcol}
For all $n\in\N_{\geq 2}$ and $\ell=|\Sigma^n/\equiv_n|$ we have
\[
\ell+\npal(n-1)\leq|\Sigma^{n+1}/\equiv_{n+1}|\leq 
\ell+\npal(n+1)+\frac{\ell-\npal(n+1)}{2}.
\]
\end{theorem}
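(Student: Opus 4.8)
The plan is to reduce both inequalities to a single quantity and handle them separately. Write $m:=|\Sigma^{n+1}/\equiv_{n+1}|$ and classify each class of length $n+1$ by the first letter of its unique \lr{}. By Remark~\ref{prepend0} prepending $\sz$ to a suffix-normal word again yields a suffix-normal word, and by Proposition~\ref{non-least} an \lr{} $\sz u$ forces $u$ to be an \lr{}; hence $u\mapsto \sz u$ is a bijection between the \lr s of length $n$ and the \lr s of length $n+1$ beginning with $\sz$. So exactly $\ell$ classes have an \lr{} starting with $\sz$, giving $m=\ell+A$ where $A$ counts the \lr s of length $n+1$ beginning with $\so$. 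The theorem is then equivalent to $\npal(n-1)\le A\le \npal(n+1)+\tfrac{\ell-\npal(n+1)}{2}$, and for $n\ge2$ the term $\NPal(n-1)$ is well defined.

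For the lower bound I would exhibit an explicit injection from $\NPal(n-1)$ into the set of \lr s of length $n+1$ starting with $\so$. Given $p\in\NPal(n-1)$, the wrapped word $\so p\so$ is again a \pnPal{} of length $n+1$ (the $\so\cdot\so$-wrapping preserves prefix-normal-palindromicity, as noted just before the theorem). By Proposition~\ref{corLol} a \pnPal{} forms a singleton class, hence $\so p\so$ is the \lr{} of its class and it plainly begins with $\so$. Since $p\mapsto\so p\so$ is injective (the inner block recovers $p$), this yields $A\ge\npal(n-1)$ and therefore $\ell+\npal(n-1)\le m$.

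For the upper bound I would count $A$ via the collapse relation. By Corollary~\ref{lexsmall} and Remark~\ref{extendible} every $\leftrightarrow$-class of \lr s of length $n$ contributes at most one \lr{} of length $n+1$ beginning with $\so$, namely $\so w$ for its lexicographically smallest member $w$; and by Lemma~\ref{falsecollapse} the only collapse class whose prepended word lands in a class with a $\sz$-starting \lr{} is $\{\sz^n\}$. Thus, up to the exceptional words $\sz^n,\so^n$, the number $A$ equals the number of collapse classes. I would then split these classes into those whose extension $\so w$ is a \pnPal{} and the rest. The palindromic ones are singletons and are in bijection with $\NPal(n+1)\setminus\{\sz^{n+1}\}$, contributing $\npal(n+1)-1$; if every remaining (non-palindromic) collapse class could be shown to contain at least two \lr s, these would account for at most $\tfrac12\bigl(\ell-(\npal(n+1)-1)\bigr)$ classes, and the two contributions would combine to the claimed bound $A\le\npal(n+1)+\tfrac{\ell-\npal(n+1)}{2}$.

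The main obstacle is exactly this last step: proving that every non-palindromic extendable collapse class has size at least two. A non-palindromic \lr{} $\so w$ lies in an $\equiv_{n+1}$-class (equivalently a band, via Lemma~\ref{schlauch1} and Theorem~\ref{collapstheo}) that does contain further words, but those words need not themselves be \emph{least representatives}, so $w$ may be alone in its collapse class; such ``non-palindromic singleton'' classes are what threaten the count. To close the gap I would try to pair each non-palindromic extension with a distinct companion \lr{} using the reversal symmetry of Lemma~\ref{symminf} and the palindromic-distance band description of Theorem~\ref{collapsindex}; controlling the number of these singleton classes — equivalently, bounding the non-palindromic \lr s beginning with $\so$ by half of $\ell-\npal(n+1)$ — is the delicate heart of the argument, and it is the point where one must be most careful (and where a naive size-counting of collapse classes would only recover the weaker estimate of Remark~\ref{palupperbound}).
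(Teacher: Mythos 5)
Your lower bound is complete and correct, and it is in substance the paper's own argument: decompose the classes of length $n+1$ by the first letter of their \lr{}, get exactly $\ell$ classes whose \lr{} starts with $\sz$, and inject $\NPal(n-1)$ into the remainder via $p\mapsto\so p\so$ (the paper reaches the conclusion that $[\so p\so]$ is a singleton through the band of Lemma~\ref{schlauch1} rather than by directly checking that the wrapping preserves prefix normality, but the content is the same). The problem is the upper bound, and you have put your finger on exactly the right spot: everything reduces to the claim that every collapse class of \lr s whose extension $\so w$ is \emph{not} a palindrome contains at least two \lr s, and neither you nor the paper proves it --- the paper's entire justification is the single unsubstantiated sentence ``All other elements collapse with at least one different element.''

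This gap cannot be closed, because the missing claim is false. Take $n=4$ and the \lr{} $v=\sz\so\so\so$. Then $\so v=\so\sz\so\so\so$ is an \lr{} of length $5$ with $f_{\so v}=(1,2,3,3,4)$ and is not a palindrome; its class is $\{\so\sz\so\so\so,\,\so\so\so\sz\so\}$, but the second element arises by prepending $\so$ to $\so\so\sz\so$, which is \emph{not} an \lr{} of length $4$ (the \lr{} of its class is $\so\sz\so\so$, and $\so\so\sz\so\so$ is a palindrome, so it lies in a different collapse class). Hence $v$ collapses with no other \lr{}, yet $\so v$ is not a palindrome: non-palindromic singleton collapse classes exist, which is precisely the case you flagged as threatening the count. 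Worse, the stated upper bound itself fails against the paper's own tables: for $n=6$ one has $\ell=23$ and $\npal(7)=8$, so the right-hand side equals $23+8+\tfrac{23-8}{2}=38.5$, while $|\Sigma^{7}/\equiv_{7}|=41$; likewise for $n=7$ the bound gives $65$ against the true value $70$. So your honest assessment that the ``delicate heart'' of the argument is unproven is accurate; in fact no proof exists, and only the lower half of the theorem survives.
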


\begin{proof}
By Corollary~\ref{minrepextpal}, $w\so$ is a least representative. From 
Lemma~\ref{schlauch1} follows that the band for possible collapsing words is 
given by $(\so (w\so)[1..n])^R=(\so w)^R=w^R1=w1$. If the band is empty, 
$\so w\so$ is the single element in $[\so w\so]$ and consequently a 
palindrome \cite{journals/corr/abs-1805-12405}. This implies the lower bound.  For the upper bound firstly all 
$\sz w$, for $w$ being a least representative in $\Sigma^n/\equiv_n$ have to be
counted and secondly all prefix normal palindromes. All other elements collapse 
with at least one different element.\qed
\end{proof}

The following results only consider \pnPal s that are different 
from $\sz^n$ and $\so^n$. Notice for these special palindromes that 
$\sz^n\sz^n$, $\so^n\so^n$, $\so^n\so\so^n$, $\sz^n\sz\sz^n$, 
$\so\so^n\so^n\so$, $\so\sz^n\sz^n\so\in\NPal(k)$ for an appropriate $k\in\N$ 
but 
$\sz^n\so\sz^n\not\in\NPal(2n+1)$.

\begin{lemma}\label{ww}
If $w\in\NPal(n)\backslash\{\so^n,\sz^n\}$ then neither $ww$ 
nor $w\so w$ are prefix normal palindromes.
\end{lemma}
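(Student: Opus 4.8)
The plan is to show that $ww$ and $w\so w$, though both palindromes, fail to be prefix normal because of a long all-$\so$ factor created at the junction of the two copies of $w$. First I would record that both words are indeed palindromes: since $w=w^R$ we have $(ww)^R=w^Rw^R=ww$ and $(w\so w)^R=w^R\so w^R=w\so w$, so the only property left to disprove is prefix normality. Next, since $w\in\NPal(n)\setminus\{\sz^n\}$, Remark~\ref{rem01} tells us that $w$ begins with $\so$, and being a palindrome it also ends with $\so$. Let $m$ be the number of leading $\so$s of $w$, so that $\Pref_m(w)=\so^m$, $\Suff_m(w)=\so^m$, and $w[m+1]=\sz$; here $m\geq 1$ because $w$ starts with $\so$, and $m\leq n-1$ because $w\neq\so^n$ forces at least one $\sz$.

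The key step is the junction argument. In $ww$ the suffix $\so^m$ of the first copy is immediately followed by the prefix $\so^m$ of the second copy, producing a factor $\so^{2m}$ of length $2m$; hence $f_{ww}(2m)=2m$. On the other hand the prefix of $ww$ of length $2m$ already contains position $m+1$ (this is where I use $m+1\leq 2m$, i.e.\ $m\geq 1$), and $w[m+1]=\sz$, so $p_{ww}(2m)\leq 2m-1<2m=f_{ww}(2m)$. This strict inequality contradicts $f_{ww}=p_{ww}$, so $ww$ is not prefix normal and therefore not a \pnPal{}.

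The argument for $w\so w$ is identical except that the inserted $\so$ lengthens the junction factor by one: the suffix $\so^m$, the middle $\so$, and the prefix $\so^m$ combine into a factor $\so^{2m+1}$ of length $2m+1$, giving $f_{w\so w}(2m+1)=2m+1$, whereas the prefix of that length again meets position $m+1$ and so carries at most $2m$ ones. I would close by noting that all the factor lengths involved stay within the respective word lengths (using $m\leq n-1$), so the comparisons are legitimate. The only point requiring a little care --- really the sole obstacle --- is pinning down the two structural facts $m\geq 1$ and $m\leq n-1$, which is exactly where the hypotheses $w\neq\sz^n$ (via Remark~\ref{rem01}) and $w\neq\so^n$ enter; everything else is the elementary counting above.
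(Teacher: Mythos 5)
Your proof is correct and follows essentially the same route as the paper's: both identify the length of the leading (hence, by palindromy, trailing) run of $\so$s in $w$ --- you call it $m$, the paper extracts it as the minimal $k$ with $f_w(k)=f_w(k+1)$ --- and then observe that the junction of the two copies (plus the middle $\so$ in the second case) creates an all-$\so$ factor longer than any prefix of that length can support. The only cosmetic difference is that you exhibit the violation at length $2m$ against the prefix-ones function, while the paper does so at length $k+1$ against the suffix-ones function; the underlying idea is identical.
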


\begin{proof}
Let $k\in[n-1]$ be minimal with $f_w(k)=f_w(k+1)$ (exists by 
$w\not\in\{\so^n,\sz^n\}$). Thus we have $f_w(k+1)=k$. Since $w\in\NPal(n)$ we 
have $|\Suff_k(w)|_{\so}=|\Pref_k(w)|_{\so}=k$ and 
$|\Suff_{k+1}(w)|_{\so}=|\Pref_{k+1}(w)|_{\so}=k$. This implies 
$f_{ww}(k+1)=k+1$ and hence $f_{ww}(k+1)\neq s_{ww}(k+1)$. The proof of $w\so 
w\not\in\NPal(2n+1)$ is similar to the proof of 
Lemma~\ref{ww}: in the middle of $w\so w$ is a larger block of $\so$'s than at 
the end.\qed
\end{proof}

\begin{lemma}\label{w0w}
Let $w\in\NPal(n)\backslash\{\sz^n\}$ with $n\in\N_{\geq 3}$. If $w\sz 
w$ 
is also a prefix normal palindrome then $w=\so^k$ or $w=\so^k\sz\so 
u\so\sz\so^k$ for some $u\in\Sigma^{\ast}$ and $k\in\N$.
\end{lemma}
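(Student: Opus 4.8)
The plan is to use the palindromic symmetry of $w\sz w$ together with a single critical factor sitting at its centre. First I would dispatch the trivial reductions. Since $w\in\NPal(n)\setminus\{\sz^n\}$, Remark~\ref{rem01} guarantees that $w$ begins, and (being a palindrome) ends, with $\so$; let $k\geq 1$ be the length of its maximal leading run of $\so$s. Because $w=w^R$, the word $w\sz w$ is automatically a palindrome, so the only hypothesis with content is its prefix normality. If $k=n$ then $w=\so^n=\so^k$ and the first case holds, so I may assume $k<n$, i.e. $w[k+1]=\sz$. The leading run occupies $[k]$ and, by symmetry, the trailing run occupies $\{n-k+1,\dots,n\}$ with $w[n-k]=w[k+1]=\sz$; were $n\leq 2k$, these two all-$\so$ runs would cover all of $[n]$ and force $w=\so^n$, contradicting $k<n$. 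Hence $2k+1\leq n$, which I will use to keep the relevant factor inside a single copy of $w$.

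The core step is the factor of $w\sz w$ straddling its centre. The first copy of $w$ ends in $\sz\so^k$, then comes the central $\sz$, and the second copy opens with $\so^k\sz$; thus $w\sz w$ has $\so^k\sz\so^k$ as a factor, of length $2k+1$ and containing $2k$ letters $\so$, so $f_{w\sz w}(2k+1)\geq 2k$. Prefix normality of $w\sz w$ then forces its prefix of length $2k+1$ to contain at least $2k$ ones. Since $2k+1\leq n$, this prefix is exactly $w[1..2k+1]$, and it already carries the zero $w[k+1]$; having at most one zero among $2k+1$ letters, it must satisfy $w[k+2..2k+1]=\so^k$. In particular $w[k+2]=\so$, so the maximal block of $\sz$s starting at position $k+1$ has length exactly one, i.e. $w$ begins with $\so^k\sz\so$. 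Applying $w=w^R$, the word ends with $(\so^k\sz\so)^R=\so\sz\so^k$, and setting $u:=w[k+3..n-k-2]$ gives $w=\so^k\sz\so\,u\,\so\sz\so^k$, the second case. Note that this direction only needs the one central factor, so no full verification of prefix normality of $w\sz w$ is required.

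The hard part will be the degenerate short regime, where the prefix constraint $w[1..2k+1]=\so^k\sz\so^k$ obtained above and its palindromic mirror $w[n-2k..n]=\so^k\sz\so^k$ overlap, namely when $n$ is close to $2k$. There the decomposition $w=\so^k\sz\so\,u\,\so\sz\so^k$ (which needs length at least $2k+4$) degenerates: for $n=2k+1$ the two mirrored constraints collapse to $w=\so^k\sz\so^k$, and for $n=2k+3$ to $w=\so^k\sz\so\sz\so^k$, neither of which literally matches the displayed form. Reconciling these short words with the stated shape --- showing that the overlapping constraints already determine $w$ completely, and then either folding them into the statement (reading $u=\epsilon$ with the two flanking $\so$s shared) or listing them as separate exceptions --- is the genuinely delicate bookkeeping, and I expect it, rather than the clean central-factor argument, to be the main obstacle.
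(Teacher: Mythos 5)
Your core argument is correct and is essentially the paper's own: both proofs compare a prefix of $w\sz w$ with a factor straddling the central $\sz$. The paper assumes the $\sz$-block following the leading run $\so^k$ has length $\ell\geq 2$ and contrasts the length-$(k+2)$ prefix $\so^k\sz\sz$ (with $k$ ones) against the central factor $\so^k\sz\so$ (with $k+1$ ones); you instead use the full central factor $\so^k\sz\so^k$ of length $2k+1$ to force $w[1..2k+1]=\so^k\sz\so^k$, which yields the same key fact ($w[k+2]=\so$, so the $\sz$-block has length one) plus a little extra information. Your preliminary reduction to $2k+1\leq n$ is sound and is a detail the paper skips entirely.

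The ``delicate bookkeeping'' you defer at the end is not a gap in your argument but a genuine defect of the lemma as printed, which the paper's proof also glosses over: it jumps from ``the $\sz$-block has length one'' directly to $w=\so^k\sz\so u\so\sz\so^k$, a form that requires $|w|\geq 2k+4$. Your suspicion about the short regime is confirmed by an actual counterexample to the literal statement: $w=\so\sz\so\in\NPal(3)$ satisfies $w\sz w=\so\sz\so\sz\so\sz\so\in\NPal(7)$ (both words appear in Table~\ref{tab:prefixnormalpalindromes}), yet $w$ matches neither $\so^k$ nor $\so^k\sz\so u\so\sz\so^k$; the same happens for $w=\so\sz\so\sz\so$. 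So these cases cannot be ``reconciled'' with the displayed factorisation --- the statement needs the words with $n\in\{2k+1,2k+3\}$ listed as additional shapes (or a lower bound on $n-2k$ imposed) --- and you were right to isolate them rather than pretend they fold in. Modulo that caveat, which applies equally to the paper's own proof, your argument is complete.
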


\begin{proof}
Consider 
$w\in\NPal(n)\backslash\{\sz^n\}$. By $w\neq\sz^n$ follows $w=\so u\so$ for an 
$u\in\Pal(n)$. If $w\neq\so^k$, there exists $k$ minimal with $w[k]=\sz$. 
Suppose $w=\so^k\sz^{\ell}u\sz^{\ell}\so^k$ for some $k\in\N$ and 
$\ell\in\N_{>1}$. Then $p_{w\sz w}(k+2)=k$ but $|w\sz w[n-k,n+2]|_{\so}=k+1$. 
This is a contradiction to $w\sz w\in\NPal(2n+1)$. This implies $w=\so^k\sz\so 
u\so \sz\so^k$. \qed
\end{proof}

A characterisation for $w\so w$ being a \pnPal{} is more 
complicated. By $w\in\NPal(n)$ follows that a block of $\so$s contains at most 
the number of $\so$s of the previous block. But if such a block contains 
strictly less $\so$s the number of $\sz$s in between can increase by the same 
amount the number of $\so$s decreased.

\begin{lemma}\label{1ww1}
Let $w\in\NPal(n)\backslash\{\so^n,\sz^n\}$. If $\so ww\so$ is also a
prefix normal palindrome then $\so\sz\in\Pref(w)$.
\end{lemma}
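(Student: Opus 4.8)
The plan is to prove the contrapositive: assuming $\so\sz \notin \Pref(w)$, show that $\so w w \so$ cannot be a prefix normal palindrome. Since $w \in \NPal(n) \setminus \{\so^n, \sz^n\}$, the word $w$ begins with $\so$ (by Remark~\ref{rem01}, only $\sz^n$ is the \pnPal{} starting with $\sz$). So if $\so\sz \notin \Pref(w)$, then $w$ must begin with $\so\so$, i.e. $w = \so\so u$ for some $u \in \Sigma^{\ast}$. More precisely, since $w \neq \so^n$, there is a minimal $k \geq 2$ with $w[k] = \so$ and $w[k+1] = \sz$; the assumption forces $k \geq 2$, so $w = \so^k \sz \cdots$ with $k \geq 2$.

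First I would locate the ``defect'' position where the $\so$-block structure of $w$ first stalls, exactly as in the proofs of Lemma~\ref{ww} and Lemma~\ref{w0w}: let $k \geq 2$ be minimal with $w[k+1] = \sz$, so $f_w(k) = k$ and $f_w(k+1) = k$ (by prefix normality the prefix $\so^k$ is the densest factor of length $k$, and no factor of length $k+1$ achieves $k+1$ ones). Then I would exhibit a factor of $\so w w \so$ that is denser than the corresponding prefix, thereby violating prefix normality. The natural candidate sits at the junction of the two copies of $w$: because $w$ is a palindrome ending in $\so^k$ (the reverse of its $\so^k$-prefix) and the two copies are glued directly, the word $\so w w \so$ contains the factor $\so^k \so^k = \so^{2k}$ straddling the seam $w w$. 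I would compare $|{\so^{2k}}|_{\so} = 2k$ against $f_{\so w w \so}(2k)$ read from the prefix: the prefix of length $2k$ is $\so \cdot w[1..2k-1] = \so\,\so^k\sz\cdots$, which contains a $\sz$ (since $k+1 \le 2k$ once $k \ge 2$) and hence strictly fewer than $2k$ ones. This makes $\so^{2k}$ a factor strictly denser than the equal-length prefix, contradicting prefix normality of $\so w w \so$.

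The main obstacle will be bookkeeping the exact lengths and ones-counts so that the seam factor $\so^{2k}$ genuinely occurs and genuinely beats the prefix; the hypothesis $k \geq 2$ is what gives the seam block length $2k \geq 4$ a $\sz$ inside the competing prefix, and this is precisely where $\so\sz \notin \Pref(w)$ is used. I would phrase the comparison through the functions already available: by the palindrome property $w$ ends in $\so^k$, so $|\Suff_k(w)|_{\so} = k$, and concatenating gives a factor $\Suff_k(w)\,\Pref_k(w) = \so^k\so^k$ of $ww$ with $2k$ ones, whence $f_{\so w w \so}(2k) \geq 2k$; meanwhile $p_{\so w w \so}(2k) = 1 + |\Pref_{2k-1}(w)|_{\so} = 1 + |\so^k\sz\cdots|_{\so} < 1 + (2k-1) = 2k$ because the $\sz$ at position $k+1 \leq 2k-1$ is counted. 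Thus $f_{\so w w \so}(2k) > p_{\so w w \so}(2k)$, so $\so w w \so \notin \NPal$, completing the contrapositive. The only care needed is the boundary case where $w$ is short relative to $k$, but $w \neq \so^n$ guarantees the stalling position $k+1 \leq n$ exists, and $n \geq 2$ together with $k \geq 2$ keeps all indices in range.
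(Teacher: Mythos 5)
Your proof is correct and takes essentially the same route as the paper's: both isolate the initial block $\so^k$ of $w$ with $k\geq 2$ forced by $\so\sz\notin\Pref(w)$, use the palindrome structure to produce the seam factor $\so^{2k}$ inside $ww$, and play it off against the length-$2k$ prefix of $\so ww\so$, which contains the $\sz$ at position $k+2\leq 2k$. Your write-up is in fact more carefully bookkept than the paper's (whose displayed exponents are slightly off), but the underlying argument is identical.
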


\begin{proof}
Let $\so ww\so\in\NPal(2n+2)$.  Since $w\neq\sz^n$, there exists 
$u\in\Sigma^{\ast}$ with $w=\so u\so$. Since $w\neq\so^n$ there exists a 
minimal $k\in\N$ with $u[k]=\sz$. If $k>1$, then $\so 
ww\so=\so^k\sz\so^{2k}\sz\so^k$ or $\so ww\so = \so^k\sz v\sz\so^{2k}\sz 
v\sz\so^k$. In both cases a contradiction to $\so ww\so\in\NPal(2n+2)$. \qed
\end{proof}

Lemma~\ref{ww}, \ref{w0w}, and \ref{1ww1} indicate that a characterization of 
prefix normal palindromes based on smaller ones is hard to determine.

\section{Conclusion}

Based on the work in \cite{journals/corr/abs-1805-12405}, we investigated prefix normal palindromes in Section 3 and gave a characterisation based on the maximum-ones function. At the end of Section 4 results for a recursive approach to determine prefix normal palindromes are given. These results show that easy connections between prefix normal palindromes of different lengths cannot be expected. By introducing the collapsing relation we were able to partition the set of extension-critical words introduced in \cite{journals/corr/abs-1805-12405}. This leads to a characterization of collapsing words which can be extended to an algorithm determining the corresponding equivalence classes. Moreover we have shown that palindromes and the collapsing classes are related.

The concrete values for prefix normal palindromes and the index of the collapsing relation remain an open problem as well as the cardinality of the equivalence classes w.r.t. the collapsing relation.
Further investigations of the prefix normal palindromes and the collapsing classes lead directly to the index of the prefix equivalence.

\bigskip

\noindent\textbf{Acknowledgments.} We would like to thank Florin Manea for helpful discussions and advice.

\bibliographystyle{plainnat}
\bibliography{pnw}


\end{document}